\newtheorem{assumption}{Assumption}[section]
\newtheorem{corollary}[assumption]{Corollary}
\newtheorem{theorem}[assumption]{Theorem}
\newtheorem{lemma}[assumption]{Lemma}
\newcommand{\SNR}[0]{\text{SNR}}
\newcommand{\E}[0]{\mathbb{E}}
\newcommand{\Cov}[0]{\operatorname{Cov}}
\newcommand{\R}[0]{\mathbb{R}}
\newcommand{\group}[0]{\mathcal{A}}
\title{The generalized method of moments for multi-reference alignment}
\begin{document}

\author{Asaf Abas,
	Tamir Bendory,
	and~Nir Sharon
	\thanks{
		A.\ Abas and N.\ Sharon are with the department of Applied Mathematics, Tel Aviv University, Israel. T.\ Bendory is with the school of Electrical Engineering, Tel Aviv University, Israel, (e-mail: asafabas@gmail.com; bendory@tauex.tau.ac.il; nir.sharon@math.tau.ac.il).
		The research is partially supported by the NSF-BSF award 2019752. T. Bendory is also supported by the Zimin Institute for Engineering Solutions Advancing Better Lives.  N.~Sharon is also supported by BSF grant no. 2018230. }
} 
\maketitle

\begin{abstract}
This paper studies the application of the generalized method of moments (GMM) to multi-reference alignment (MRA): the problem of estimating a signal from its circularly-translated and noisy copies. We begin by proving that the GMM estimator maintains its asymptotic optimality for statistical models with group symmetry, including MRA. Then, we conduct a comprehensive numerical study and show that the GMM substantially outperforms the classical method of moments, whose application to MRA has been studied thoroughly in the literature. We also formulate the GMM {to estimate} a three-dimensional molecular structure using cryo-electron microscopy and present numerical results on simulated data.	
\end{abstract}

\begin{IEEEkeywords}
generalized method of moments, multi-reference alignment, single-particle cryo-electron microscopy, {orbit recovery problem}
\end{IEEEkeywords}
	\maketitle

\section{Introduction}
Multi-reference alignment (MRA) is the problem of recovering a signal $x	\in \R^L$ from
\begin{equation}\label{Eq-1-1}
y_i = R_{s_i}  x + \varepsilon_i  , \quad i=1,\ldots,N,
\end{equation}
where~$R_{s}$ circularly-translates {the signal} by~$s$ elements, i.e., $(R_s x)[j] = x\left[(j-s) \bmod L\right]$, and $\varepsilon_i \overset{i.i.d.}{\sim} \mathcal{N}(0,\Sigma)$. The translations~$s_i$ are sampled from an unknown distribution~$\rho$. 
Since the translations {$R_{s_1},\ldots,R_{s_N}$} are unknown, we cannot distinguish {between~\eqref{Eq-1-1} and the set of observations $y_i =R_{s_i - \tilde s}(R_{\tilde s} x)+\varepsilon_i$ for any $\tilde s\in\mathbb{Z}$}. Thus, our goal is to estimate~$x$, up to {a} circular translation. In group theory terminology, the set of signals $\{R_sx\}_{s=0}^{L-1}$ is called the \textit{orbit} of the signal under the group of circular translations.

The MRA model is motivated by applications in signal processing~\cite{zwart2003fast}	 and structural biology \cite{scheres2005maximum, theobald2012optimal, ma2019heterogeneous,Bendory2018a}. In particular, it has been demonstrated as a useful mathematical abstraction for single-particle cryo-electron microscopy (cryo-EM), an emerging technology to elucidate the 3-D structure of biomolecules \cite{Frank2006, Nogales2015,Vinothkumar2016,Bendory2020, Singer2020}. Cryo-EM is the chief motivation of this paper, and is the main focus of Section~\ref{Sc-4}. 

If the signal-to-noise ratio (SNR) is sufficiently high, one can recover the signal~$x$ by estimating the circular translations~{$R_{s_1},\ldots,R_{s_N}$} using a variety of synchronization  algorithms~\cite{Singer2011, Bandeira2014multireference, Boumal2016, perry2018message}, align them (i.e., undo the circular translations), and average out the noise. However, low SNR hinders reliable shift estimation~\cite{aguerrebere2016fundamental,bendory2019multi}, and thus in this regime one must estimate the signal directly.

The \textit{method of moments} (MoM) is a classical statistical inference technique, tracing back to 1894~\cite{pearson1894contributions}. The MoM estimator, which is described in detail in Section~\ref{Sc-2-1}, results in a set of parameters whose moments agree with the empirical moments of the observed data~$\{y_i\}_{i=1}^N$. {A single-pass through the observations is required to compute the empirical moments.} This {single-pass requirement} stands in contrast to main-stream parameter estimation techniques, such as  maximum likelihood {estimation}, which usually {iterates} over the data. Thus, the MoM is an attractive computational framework for massive data sets. In addition, it was shown that in the low SNR regime, when $\SNR \to 0$, $N \to \infty$, and~$L$ is fixed, the MoM achieves the optimal sample complexity {of MRA, namely, }the minimal number of observations needed for recovery up to {an} arbitrary precision~\cite{bandeira2017estimation,bandeira2017optimal,abbe2018estimation,Abbe2017, perry2019sample,romanov2020multi}. {Notwithstanding}, finding parameters that approximate the observable moments, namely moment fitting, {frequently} requires solving a system of nonlinear polynomial equations, a challenging computational task for high-dimensional data.

A standard method for fitting the analytic and the empirical moments is by minimizing a least-squares (LS) objective, which often results in {a sub-optimal solution}. As a remedy, this paper studies the \textit{generalized method of moments} (GMM), which suggests matching the moments by a weighted LS objective~\cite{Hansen1982}, while providing an explicit expression of the optimal set of weights. Choosing the optimal weights enjoys appealing statistical properties, as discussed in Section~\ref{Sc-2.5}. 

This work studies the GMM and its application to the MRA problem. Applying the GMM to MRA raises two main challenges: the dimensionality of the problem, which is typically high, and the solution's symmetry (i.e., the solution is defined up to a symmetry). In particular, classical GMM theory shows that the GMM provides {an} optimal estimator if a single set of parameters fits the observable moments. Namely, the polynomial system of equations has a unique solution. Unfortunately, this is never the case for MRA as the solution is defined up to a circular translation. Filling a theoretical gap, this work proves that the GMM retains its favorable 
 statistical properties even when the statistical model has an intrinsic symmetry. 

The paper is organized as follows. In Section~\ref{Sec-2}, we describe in detail the MoM and the GMM estimators. Next, Section~\ref{Sc-2.5} {discusses} the properties of the GMM estimator and our theoretical work to extend it for models {with} intrinsic symmetry. Section~\ref{Sc-3} provides a comprehensive numerical study. Our study demonstrates that the GMM outperforms the MoM in a variety of noise models and levels. We also provide a heuristic to predict when the performance gap between the GMM and the MoM is expected to be significant. Ultimately, Section~\ref{Sc-4} formulates the GMM for the problem of recovering a 3-D molecular structure using cryo-EM and presents initial numerical results. 

\section{The Generalized Method of Moments} \label{Sec-2}

\subsection{The method of moments}\label{Sc-2-1}
Before introducing the GMM framework, we begin by presenting the classical MoM. Suppose that a random variable $y\in\R^r$ is drawn from a distribution which can be characterized by a set of parameters~$\theta_0\in\Theta$, where~$\Theta$ is a compact space. The goal is to recover the unknown parameters~$\theta_0$ from~$N$ samples $y_1,\ldots, y_N$. In the MoM, the underlying idea is to estimate~$\theta_0$ from the first~$k_M$ empirical moments of the observations. We calculate the empirical moments from the data by averaging over the moments of individual observations. In particular, the~$k$-th empirical moment is {defined as}
\begin{equation}\label{Eq-2-6}
\hat{M}_k = \frac{1}{N}\sum_{i=1}^N  y_i^{\otimes k},
\end{equation}
where~$y_i^{\otimes k}$ is a tensor with~$r^k$ entries. Each entry is  given by $y_i^{\otimes k}(n)=\prod_{j=1}^k y_i(n_j)$, where $n = (n_1, \ldots, n_k)$ such that $1\le n_i \le r$ for $i = 1,\ldots,k$.
By the law of large numbers, for a large~$N$ we have,
\begin{equation*}
	\hat{M}_k \approx M_k(\theta_0) := \E\left[y^{\otimes k} \right], \quad k=1,\ldots,k_M, 
\end{equation*}
where $\E[\cdot]$ denotes expectation. 

The MoM consists of two stages. First, one computes the first $k_M$ observable moments  $\hat M_k$ for $k=1,\ldots,k_M$ from the data. 
In this work, we {usually} use the first two moments, that is, $k_M = 2$.
In the second stage, we wish to find a set of parameters~$\theta$ so that $\hat{M}_k \approx M_k(\theta)$ for $k=1,\ldots,k_M$; this {occasionally} entails solving a system of polynomial equations. When a closed-form solution is not available, it is common to minimize a LS objective function,
\begin{equation}\label{Eq-2-2}
\hat{\theta}^{LS}_N =\arg\min_{\theta \in \Theta} \sum_{k=1}^{k_M} || M_k(\theta) - \hat{M}_k||_\text{F}^2,
\end{equation}
where $\hat{\theta}^{LS}_N$ denotes the LS estimator, and~$\Theta$ is the parameter space. 

{We note that the MoM is computationally attractive only when the number of observations~$N$ is much larger than the signal's length~$L$. This is true since the computational complexity of the MoM is proportional to~$L^{k_M}N$, whereas the complexity of methods that maximize the likelihood function (such as expectation-maximization) usually scales as~$TLN$, where~$T$ is the number of  passes through the data (which tends to increase as the SNR decreases~\cite{janco2021accelerated,Bendory2018}).}

\subsection{The GMM framework}\label{Sc-2-2}

In its most simplified form, the GMM generalizes~\eqref{Eq-2-2} by replacing the LS objective with a weighted LS. In particular, a specific choice of weights guarantees favorable asymptotic statistical properties, such as {the} minimal asymptotic variance of the estimation error. We introduce these properties in detail in Section~\ref{Sc-2.5}. 

Let us define the \textit{moment function}, $f(\theta, y)\colon \Theta \times \R^r \to \R^q$. The moment function needs to be chosen such that its expectation value is zero only at a single point $\theta=\theta_0$. Namely,
\begin{equation}\label{Eq-2-4}
\E\left[f(\theta,y)\right] = 0 \quad \text{if and only if} \quad \theta = \theta_0.
\end{equation}
We refer to~\eqref{Eq-2-4} as the \textit{uniqueness of the parameter set} condition.
Henceforth, we choose the moment function to be
\begin{equation}\label{Eq-2-3}
f(\theta, y_i) = \left[M_1(\theta)-y_i; \ldots; M_{k_M}(\theta) - y_i^{\otimes {k_M}}\right].
\end{equation}
For convenience, we treat each moment as a column vector and the right hand side of~\eqref{Eq-2-3} as their concatenation. For example  $M_2(\cdot)\in\R^{r^2}$, {and} $f(\cdot)$ is in~$\R^{r+r^2+...+r^{k_M}}$. While we choose~$f$ as in~\eqref{Eq-2-3}, {any} moment function can be  chosen as long as it satisfies the uniqueness condition and a few additional regularity conditions ({as} introduced in Appendix~\ref{Ap-1}). This flexibility enables the GMM to be applied to a wide range of problems, such as subspace estimation~\cite{Fan2018}.

The estimated sample moment function is the average of~$f$ over~$N$ observations:
\begin{equation}\label{Eq-2-5}
g_N(\theta) = \frac{1}{N} \sum_{i=1}^N f(\theta, y_i).
\end{equation}
The GMM estimator is defined as the minimizer of the weighted LS expression, 	
\begin{equation} \label{Eq-2-1}
\hat{\theta}_N = \arg\min_{\theta \in \Theta} \ g_N(\theta)^T W_N g_N(\theta).
\end{equation}
Here, $W_N$ is a fixed positive semi-definite (PSD) matrix. Note that the LS estimator~\eqref{Eq-2-2} is a special case of~\eqref{Eq-2-1}  when~$f$ is chosen as {in}~\eqref{Eq-2-3} and $W_N=I$. 

\section{Large Sample Properties}\label{Sc-2.5}

Before presenting the statistical properties of the GMM, we fix notation. We denote by $\overset{p}{\to}$ and $\overset{d}{\to}$ convergence in  probability and in distribution, respectively. Let 
\begin{equation} \label{eqn:cov_mat_S}
S := \lim_{N\to \infty}\Cov\left[\sqrt{N}g_N(\theta_0)\right],
\end{equation}
be the covariance matrix of the estimated sample moment function~\eqref{Eq-2-5} at the ground truth~$\theta_0$. We denote by $\{W_N\}_{N=1}^\infty$ a sequence of PSD matrices which converges almost surely to a positive definite (PD) matrix~$W$. The expectation of the Jacobian of the moment function at the ground truth~$\theta_0$ is denoted by $G_0 = \E\left[\partial f(\theta_0, y) / \partial \theta^T\right]$. 

\subsection{GMM with a unique set of parameters}


{The large sample properties of the GMM estimator, under the uniqueness of the parameter set~\eqref{Eq-2-4},   were derived in~\cite{Hansen1982}, and are presented in the following theorem.}

\begin{theorem}\label{Thm-2-6}
	Under the uniqueness of the parameter set~\eqref{Eq-2-4}, and {the} regularity conditions {\ref{As-Ap1-1*}-\ref{As-Ap1-8*} of Appendix~\ref{Ap-1}}, the GMM estimator satisfies:
	\begin{enumerate}[label={\Alph*}.]
		\item  \label{Thm-2-2}
		\textnormal{(Consistency)} $\hat{\theta}_N \overset{p}{\to} \theta_0$.
		
		\item \label{Thm-2-3} \textnormal{(Asymptotic normality)}
		\[\sqrt{N} ( \hat{\theta}_N - \theta_0) \overset{d}{\to} \mathcal{N}(0, M S M^T ),\] where $M =[G_0^T W  G_0]^{-1} G_0^T  W$.
		
		\item \label{Thm-2-5} \textnormal{(Optimal choice of a weighting matrix)} The minimum asymptotic variance of $\hat{\theta}_N$ is given by $(G_0^T S^{-1} G_0)^{-1}$ and is attained by $W = S^{-1}$.
	\end{enumerate}
\end{theorem}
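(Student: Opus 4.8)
The plan is to treat $\hat\theta_N$ as an extremum (M-)estimator and run the classical three-step argument of \cite{Hansen1982}: establish consistency, then asymptotic normality via a first-order expansion, and finally settle the optimal weighting by a matrix inequality. For consistency (part~\ref{Thm-2-2}), I would contrast the sample objective $Q_N(\theta)=g_N(\theta)^T W_N g_N(\theta)$ with its population counterpart $Q_0(\theta)=g_0(\theta)^T W g_0(\theta)$, where $g_0(\theta):=\E[f(\theta,y)]$. The uniqueness condition~\eqref{Eq-2-4} together with the positive definiteness of $W$ guarantees that $Q_0$ is continuous and uniquely minimized at $\theta_0$ (identification). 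The technical core is a uniform law of large numbers: under compactness of $\Theta$, continuity of $f(\cdot,y)$, and the dominance/integrability conditions of Appendix~\ref{Ap-1}, one gets $\sup_{\theta\in\Theta}\|g_N(\theta)-g_0(\theta)\|\overset{p}{\to}0$, which together with $W_N\to W$ almost surely upgrades to $\sup_{\theta\in\Theta}|Q_N(\theta)-Q_0(\theta)|\overset{p}{\to}0$. The standard consistency theorem for extremum estimators then yields $\hat\theta_N\overset{p}{\to}\theta_0$.

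For asymptotic normality (part~\ref{Thm-2-3}), since $\theta_0$ is interior and $\hat\theta_N$ is consistent, for large $N$ the first-order condition $G_N(\hat\theta_N)^T W_N g_N(\hat\theta_N)=0$ holds, where $G_N(\theta):=\partial g_N(\theta)/\partial\theta^T$. A componentwise mean-value expansion of $g_N(\hat\theta_N)$ about $\theta_0$ gives $g_N(\hat\theta_N)=g_N(\theta_0)+G_N(\bar\theta)(\hat\theta_N-\theta_0)$ for some $\bar\theta$ between $\hat\theta_N$ and $\theta_0$. Substituting into the first-order condition and solving produces
\[
\sqrt N(\hat\theta_N-\theta_0) = -\bigl[G_N(\hat\theta_N)^T W_N G_N(\bar\theta)\bigr]^{-1} G_N(\hat\theta_N)^T W_N\,\sqrt N\, g_N(\theta_0).
\]
Because $f(\theta_0,\cdot)$ has mean zero and the samples are i.i.d., the central limit theorem gives $\sqrt N\,g_N(\theta_0)\overset{d}{\to}\mathcal N(0,S)$ with $S$ as in~\eqref{eqn:cov_mat_S}. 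Consistency of $\hat\theta_N$ and $\bar\theta$, continuity, and a uniform law of large numbers for the Jacobian give $G_N(\hat\theta_N),G_N(\bar\theta)\overset{p}{\to}G_0$, while $W_N\to W$; Slutsky's theorem then delivers $\sqrt N(\hat\theta_N-\theta_0)\overset{d}{\to}\mathcal N(0,MSM^T)$ with $M=[G_0^T W G_0]^{-1}G_0^T W$.

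For the optimal weighting (part~\ref{Thm-2-5}), I would first substitute $W=S^{-1}$ into $M$ and check directly that $MSM^T$ collapses to $(G_0^T S^{-1}G_0)^{-1}$. Optimality then follows from a short identity: set $A:=M$ and $B:=(G_0^T S^{-1}G_0)^{-1}G_0^T S^{-1}$, so that $AG_0=BG_0=I$. A one-line computation using $B^T=S^{-1}G_0(G_0^T S^{-1}G_0)^{-1}$ shows $ASB^T=AG_0(G_0^T S^{-1}G_0)^{-1}=(G_0^T S^{-1}G_0)^{-1}=BSB^T$, so the cross term $(A-B)SB^T$ vanishes. Hence $MSM^T-(G_0^T S^{-1}G_0)^{-1}=ASA^T-BSB^T=(A-B)S(A-B)^T\succeq 0$ since $S$ is positive definite, establishing that $(G_0^T S^{-1}G_0)^{-1}$ is the minimum and is attained at $W=S^{-1}$ (where $A=B$).

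The main obstacle I anticipate lies in part~\ref{Thm-2-2}: verifying the uniform convergence hypotheses, namely the dominance and stochastic-equicontinuity requirements, from the regularity conditions of Appendix~\ref{Ap-1}, and likewise securing the uniform convergence of the Jacobian $G_N$ used in part~\ref{Thm-2-3}. Once the uniform law of large numbers and the Jacobian convergence are in place, the expansion in part~\ref{Thm-2-3} and the algebra in part~\ref{Thm-2-5} are comparatively mechanical. I note that all of this presumes a \emph{unique} minimizer $\theta_0$; the symmetry of the MRA orbit violates exactly this hypothesis, which is precisely the gap the remainder of the paper must close.
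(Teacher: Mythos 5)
Your proposal is correct and follows the standard Hansen--Hall three-step argument (uniform convergence plus identification for consistency, mean-value expansion of the first-order condition plus Slutsky for normality, and the $(A-B)S(A-B)^T\succeq 0$ identity for optimal weighting), which is exactly the route the paper takes: it cites~\cite{Hansen1982} for this theorem and reproduces the same structure in Appendices~\ref{Ap-2} and~\ref{Ap-3} when extending it to the orbit setting. Your closing observation that uniqueness of the minimizer is precisely what MRA violates is also the paper's stated motivation for Theorem~\ref{Thm-2-1} and Corollary~\ref{Thm-2-4}.
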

Theorem~\ref{Thm-2-6} provides a matrix~$W$ {that} guarantees a minimal asymptotic variance of the estimator’s error. In Appendix~\ref{Ap-1}, we present the regularity conditions of Theorem~\ref{Thm-2-6} in detail. {While most of these conditions hold for the MRA model, e.g., continuity of~$f(\cdot)$, the} intrinsic symmetry of the MRA model~\eqref{Eq-1-1} violates the uniqueness condition~\eqref{Eq-2-4}. In {the next section}, we extend Theorem~\ref{Thm-2-6} and prove that the optimality of the GMM remains true even if there is a unique orbit of signals that fits the moments, rather than a unique signal.

The covariance matrix $S$ {of~\eqref{eqn:cov_mat_S}, which plays a central role in Theorem~\ref{Thm-2-6},} is required to be a PD matrix, see Appendix~\ref{Ap-1}. Therefore, the moment function must be chosen so that~$S$ is full-rank; see for example~\cite{Doran2006}.
In this work, we noticed that if we remove the repeating  	entries of~$f$ (that appear due to the inherent symmetries of the moments), the covariance is indeed full rank (although, in some cases, ill-conditioned).


It is important to note that in practice, the ground truth~$\theta_0$ is unknown a priori, so we cannot use the optimal weighting matrix. A common heuristic is to replace~\eqref{Eq-2-1} with an iterative scheme {called iterative GMM}. {However, for our specific choice of  moment function~$f$~\eqref{Eq-2-3}, the covariance of $g_N$ depends solely on the observations~$\{y_i\}_{i=1}^N$, and not on the parameter set~$\theta$, namely,
\begin{equation}\label{Eq-2-7}
	\Cov[g_N(\theta)] = \Cov\left[y_i; \ldots; y_i^{\otimes {k_M}}\right].
\end{equation}
Therefore, one can compute  both the moment function and the covariance matrix~$S$~\eqref{eqn:cov_mat_S} in a single pass. The algorithm is detailed in Algorithm~\ref{Alg-2-1}.} 

\begin{algorithm}
	\SetAlgoLined
	\KwOut{{The estimated parameter vector $\hat{\theta}_{\text{GMM}} $ }}
	\textbf{Input:}  A set of observations $\{y_i\}_{i=1}^N$ 
	\begin{enumerate}
		\item Compute the {estimated sample} moment function $g_N$, given in~\eqref{Eq-2-5}
		\item Compute the covariance $S = \Cov\left[y_i; \ldots; y_i^{\otimes {k_M}}\right]$ and its inverse $W = S^{-1}$ 
		\item Solve
		$\arg\min_{\theta \in \Theta} g_N(\theta)^T W g_N(\theta)$
	\end{enumerate}
	\caption{{The GMM estimator}}
	\label{Alg-2-1}
\end{algorithm}


\subsection{GMM with a unique orbit}\label{Sc-2.5-1}
We are now ready to introduce the main theoretical contribution of this paper: extending Theorem~\ref{Thm-2-6} to the case where there is an orbit of signals that agrees with the observable moments. This is the case, for example, in the MRA model~\eqref{Eq-1-1} {as well as in cryo-EM~\cite{Bendory2020}}. 

Let~$\group$ be a group acting on a vector space~$\Theta$. We denote the group action by $a\circ\theta$. To extend Theorem~\ref{Thm-2-6}, we make {the following} two assumptions.
\begin{assumption} \label{As-2-2}
	Global identification up to an orbit: 
	\[\forall a \in \group: \  \E \left[ f(\theta,y)\right] = 0 \text{ if and only if } \theta = a \circ \theta_0.\]
\end{assumption}
\begin{assumption}{}\label{As-2-1}
	Symmetry of the moment function:
	\[\forall a \in \group, \forall \theta \in \Theta, \forall i \in \{1,..., N\} : \  f(a \circ \theta, y_i) = f(\theta,y_i).\]
\end{assumption}

The next {theorem} shows that under {the above} assumptions, the large sample properties of Theorem~\ref{Thm-2-6} remain true for problems with symmetry. In particular, Assumptions~\ref{As-2-2} and~\ref{As-2-1} are sufficient to guarantee consistency, as shown in Theorem~\ref{Thm-2-1}; the rest of the properties are direct corollaries. The proofs {of Theorem~\ref{Thm-2-1} and Corollary~\ref{Thm-2-4}} are provided in Appendices~\ref{Ap-2} and~\ref{Ap-3}{, respectively}.

\begin{theorem}[Consistency] \label{Thm-2-1}
	Under Assumptions \ref{As-2-2}, \ref{As-2-1}, and {the regularity conditions \ref{As-Ap1-1*}-\ref{As-Ap1-5*} of Appendix~\ref{Ap-1}}, there exists a sequence $\{a_{N}\}_{N=1}^\infty \subset \group$, such that:   
	\[a_N \circ \hat{\theta}_N \overset{p}{\to} \theta_0,\]
	where~$\hat\theta_N$ is the GMM estimator.
\end{theorem}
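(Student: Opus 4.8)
The plan is to recast the consistency statement as convergence of $\hat\theta_N$ to the \emph{orbit} $\mathcal{O} := \{a\circ\theta_0 : a\in\group\}$, and only at the very end to convert orbit-convergence into the stated convergence of $a_N\circ\hat\theta_N$ to $\theta_0$. This detour is needed because the classical argument behind Theorem~\ref{Thm-2-6} concludes by invoking uniqueness of the minimizer of the limiting objective; under Assumption~\ref{As-2-2} that minimizer is no longer a point but the whole orbit $\mathcal{O}$, so the off-the-shelf argmin-consistency theorem does not apply verbatim.

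First I would import, essentially unchanged, the two analytic ingredients of Hansen's proof, since neither touches identification. Writing $Q_N(\theta) := g_N(\theta)^T W_N g_N(\theta)$ and $Q(\theta) := \E[f(\theta,y)]^T W \E[f(\theta,y)]$, the regularity conditions~\ref{As-Ap1-1*}--\ref{As-Ap1-5*} (compactness of $\Theta$, continuity and domination of $f$ yielding a uniform law of large numbers for $g_N$, together with $W_N \overset{a.s.}{\to} W$ with $W$ positive definite) give $\sup_{\theta\in\Theta}\lvert Q_N(\theta)-Q(\theta)\rvert \overset{p}{\to} 0$ and guarantee that the minimizer $\hat\theta_N$ exists.

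Next comes the step that replaces the uniqueness argument. Because $W$ is positive definite, $Q(\theta)\ge 0$ with equality exactly when $\E[f(\theta,y)]=0$, and by Assumption~\ref{As-2-2} this happens precisely on $\mathcal{O}$, so $\arg\min_\Theta Q = \mathcal{O}$. I would then run the standard well-separated-minimum argument relative to the set $\mathcal{O}$: for fixed $\epsilon>0$ the set $\Theta_\epsilon := \{\theta\in\Theta : \operatorname{dist}(\theta,\mathcal{O})\ge\epsilon\}$ is compact and $Q$ is continuous and strictly positive on it, so $\delta := \min_{\Theta_\epsilon} Q > 0$. On the high-probability event $\{\sup_\theta \lvert Q_N - Q\rvert < \delta/2\}$ one has $Q_N(\theta) > \delta/2$ for every $\theta\in\Theta_\epsilon$ while $Q_N(\theta_0) < \delta/2$ (using $Q(\theta_0)=0$), forcing $\hat\theta_N \notin \Theta_\epsilon$. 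Hence $\operatorname{dist}(\hat\theta_N,\mathcal{O}) \overset{p}{\to} 0$.

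The final and most delicate step is to lift this orbit-convergence to the claimed form. Since $\Theta$ is compact and the action is continuous, $\mathcal{O}$ is compact, so a nearest orbit point $b_N\circ\theta_0$ to $\hat\theta_N$ is attained; setting $a_N := b_N^{-1}$ gives $a_N\circ(b_N\circ\theta_0)=\theta_0$, and then $\lVert a_N\circ\hat\theta_N - \theta_0\rVert = \lVert a_N\circ\hat\theta_N - a_N\circ(b_N\circ\theta_0)\rVert \to 0$ provided the family of maps $\theta\mapsto a\circ\theta$, $a\in\group$, is equicontinuous. I expect this to be the main obstacle in the general statement: one must (i) choose $b_N$ via a measurable selection of the nearest-point map, so that $\{a_N\}$ is a genuine random sequence, and (ii) control $\lVert a_N\circ\hat\theta_N - a_N\circ(b_N\circ\theta_0)\rVert$ uniformly over the group. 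For the MRA model both are immediate, because $\group$ is the finite cyclic group acting by the orthogonal operators $R_s$, so the action is by isometries and the orbit is finite; I would therefore state isometry (or, more generally, compactness-plus-continuity) of the action as the hypothesis making the lift go through. Conceptually, Assumption~\ref{As-2-1} is what underwrites the whole reduction: it forces $f$, hence $g_N$, $Q_N$, and $Q$, to be $\group$-invariant, so that $f$ descends to the quotient $\Theta/\group$, on which Assumption~\ref{As-2-2} becomes a genuine uniqueness condition and Theorem~\ref{Thm-2-6} applies directly.
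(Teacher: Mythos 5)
Your proof follows essentially the same route as the paper's: uniform convergence of $Q_N$ to $Q$ (the paper's Lemma on uniform convergence), then a well-separated-minimum argument over the complement of a $\delta$-neighborhood of the orbit --- the paper's set $N_\delta(\theta_0)$ is exactly your $\{\theta : \operatorname{dist}(\theta,\mathcal{O})<\delta\}$ --- using positive definiteness of $W$ and Assumption~\ref{As-2-2} to get a strictly positive infimum of $Q$ off the orbit, and the sandwich $Q(\hat\theta_N)\le Q_N(\hat\theta_N)+\varepsilon/3\le Q_N(\theta_0)+\varepsilon/3<\varepsilon$. The one place you go beyond the paper is the final lift from $\operatorname{dist}(\hat\theta_N,\mathcal{O})\overset{p}{\to}0$ to the existence of $a_N$ with $a_N\circ\hat\theta_N\overset{p}{\to}\theta_0$: the paper simply asserts this (``since $\delta$ was chosen arbitrarily, we conclude\dots''), whereas you correctly note that it needs the maps $\theta\mapsto a\circ\theta$ to be uniformly equicontinuous over $\group$ (e.g., isometric, as for MRA's circular shifts) together with a measurable selection of the nearest orbit point --- a small but genuine gap in the paper's write-up that your version makes explicit.
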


\begin{corollary}\label{Thm-2-4}
	Under Assumptions \ref{As-2-2}, \ref{As-2-1}, and {the regularity conditions \ref{As-Ap1-1*}-\ref{As-Ap1-8*} of Appendix~\ref{Ap-1}}, there exists a sequence $\{a_{N}\}_{N=1}^\infty \subset \group$, such that:
	\begin{enumerate}[label={\Alph*}.]
		\item \textnormal{(Asymptotic normality)}  
		\[\sqrt{N} ( a_{N} \circ \hat{\theta}_N - \theta_0) \overset{d}{\to} \mathcal{N}(0, M S M^T ),\] where $M =[G_0^T W  G_0]^{-1} G_0^T  W$.
		
		\item \textnormal{(Optimal choice of weighting matrix)} The minimum asymptotic variance of $a_N \circ \hat{\theta}_N$ is given by $(G_0^T S^{-1} G_0)^{-1}$ and is attained for $W = S^{-1}$.
	\end{enumerate}
\end{corollary}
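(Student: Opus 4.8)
The plan is to reduce the symmetric problem to the classical single-point asymptotics of Theorem~\ref{Thm-2-6} by working with the \emph{aligned} estimator $\tilde\theta_N := a_N \circ \hat\theta_N$, where $\{a_N\}$ is the alignment sequence already furnished by Theorem~\ref{Thm-2-1}. The crucial observation is that Assumption~\ref{As-2-1} makes the entire GMM objective invariant along orbits: since $f(a\circ\theta, y_i) = f(\theta, y_i)$ for every $a\in\group$, we have $g_N(a\circ\theta) = g_N(\theta)$, and hence $g_N(\theta)^T W_N g_N(\theta)$ is constant on each orbit. Consequently, if $\hat\theta_N$ minimizes the objective, then so does $\tilde\theta_N = a_N\circ\hat\theta_N$ for every $N$. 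We therefore possess a sequence of genuine GMM minimizers that, by Theorem~\ref{Thm-2-1}, satisfies $\tilde\theta_N \overset{p}{\to}\theta_0$ --- a single point rather than an orbit --- which is precisely the setting needed to run the standard expansion.

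First I would establish part~A. Because $\theta_0$ lies in the interior of $\Theta$ (regularity) and $\tilde\theta_N\overset{p}{\to}\theta_0$, with probability tending to one $\tilde\theta_N$ is an interior minimizer and satisfies the first-order condition $G_N(\tilde\theta_N)^T W_N g_N(\tilde\theta_N)=0$, where $G_N(\theta):=\partial g_N(\theta)/\partial\theta^T$. A mean-value expansion of $g_N(\tilde\theta_N)$ about $\theta_0$ gives $g_N(\tilde\theta_N) = g_N(\theta_0) + G_N(\bar\theta_N)(\tilde\theta_N - \theta_0)$ for some intermediate $\bar\theta_N$, and substituting into the first-order condition yields
\[
\sqrt{N}\,(\tilde\theta_N - \theta_0) = -\bigl[G_N(\tilde\theta_N)^T W_N G_N(\bar\theta_N)\bigr]^{-1} G_N(\tilde\theta_N)^T W_N\,\sqrt{N}\,g_N(\theta_0).
\]
I would then invoke the central limit theorem $\sqrt{N}\,g_N(\theta_0)\overset{d}{\to}\mathcal{N}(0,S)$; the uniform convergence of the Jacobian together with consistency, giving $G_N(\tilde\theta_N),G_N(\bar\theta_N)\overset{p}{\to}G_0$; and $W_N\to W$. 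Slutsky's theorem then delivers $\sqrt{N}(\tilde\theta_N-\theta_0)\overset{d}{\to}\mathcal{N}(0,MSM^T)$ with $M=[G_0^T W G_0]^{-1}G_0^T W$, the sign being absorbed by the quadratic form. This is exactly statement~A.

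Part~B is then a purely linear-algebraic optimization of $V(W):=MSM^T$ over admissible weights, identical to the unconstrained case since $G_0$, $S$, and $M$ retain the same form here. Writing $D:=(G_0^T W G_0)^{-1}G_0^T W$ and $D_*:=(G_0^T S^{-1}G_0)^{-1}G_0^T S^{-1}$, both satisfy $DG_0=D_*G_0=I$, so the cross term $D_* S (D-D_*)^T$ vanishes and $V(W)-(G_0^T S^{-1}G_0)^{-1}=(D-D_*)S(D-D_*)^T=BB^T$ with $B=(D-D_*)S^{1/2}$. This Gram matrix is positive semi-definite and vanishes when $W=S^{-1}$, which yields the minimum asymptotic variance $(G_0^T S^{-1}G_0)^{-1}$ and identifies the optimal weight, establishing~B.

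The main obstacle is the transfer of first-order optimality from $\hat\theta_N$, which converges only up to an orbit, to the aligned iterate $\tilde\theta_N$, which converges to the single point $\theta_0$; this is exactly what objective-invariance (Assumption~\ref{As-2-1}) secures, and it is what makes the expansion about $\theta_0$ legitimate. A secondary technical point is that $a_N$ is random and data-dependent, so the mean-value expansion and the convergences $G_N(\tilde\theta_N),G_N(\bar\theta_N)\to G_0$ must be justified along this random sequence; here the uniform (over a neighborhood of $\theta_0$) convergence of $G_N$ from the regularity conditions, combined with the consistency of $\tilde\theta_N$ from Theorem~\ref{Thm-2-1}, closes the gap.
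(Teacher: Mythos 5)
Your proposal is correct, and for part~A it follows essentially the same route as the paper: a mean-value expansion of $g_N$ about $\theta_0$ along the aligned sequence, the first-order condition, convergence of the Jacobians via consistency plus the uniform convergence of Assumption~\ref{As-Ap1-8*} on the orbit-neighborhood, then the CLT and Slutsky. The one genuine difference in~A is how you transfer first-order optimality: the paper keeps the stationarity condition at $\hat\theta_N$ and asserts $G_N(a_N\circ\hat\theta_N)=G_N(\hat\theta_N)$ directly from Assumption~\ref{As-2-1}, whereas you observe that orbit-invariance of $g_N$ makes the whole objective constant on orbits, so $\tilde\theta_N=a_N\circ\hat\theta_N$ is itself a minimizer and satisfies its own first-order condition. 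Your version is actually slightly safer: differentiating $g_N(a\circ\theta)=g_N(\theta)$ only yields $G_N(a\circ\theta)\,D_a=G_N(\theta)$, where $D_a$ is the Jacobian of the action (a permutation for the MRA shift action), so the paper's literal identity of Jacobians needs this extra chain-rule factor, which your argument never requires. For part~B the routes diverge more visibly: the paper decomposes $V(W)-V(S^{-1})=V_1+C+C^T$ via the asymptotic covariance of two differently-aligned estimator sequences $\{a_N\},\{b_N\}$ and computes the cross-covariance $C$ probabilistically, while you prove the same fact by the deterministic matrix identity: since $M(W)G_0=M(S^{-1})G_0=I$, the cross term $M(S^{-1})S\,[M(W)-M(S^{-1})]^T$ vanishes and $V(W)-(G_0^TS^{-1}G_0)^{-1}=[M(W)-M(S^{-1})]\,S\,[M(W)-M(S^{-1})]^T\succeq 0$. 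This buys simplicity --- you need only the variance formula $V(W)=M(W)SM(W)^T$ already delivered by part~A, with no joint asymptotics of two estimators --- at no loss of generality; the underlying algebra (the cross term vanishing identically) is the same fact the paper's $C=0$ computation encodes. Two cosmetic points: the mean-value expansion should formally be applied row-by-row with possibly distinct intermediate points (the paper's $\hat\theta_N^\lambda$ notation does exactly this), and you should note, as is implicit in both proofs, that $G_N(\tilde\theta_N)^TW_NG_N(\bar\theta_N)$ is invertible with probability tending to one because its limit $G_0^TWG_0$ is.
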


\section{Application of GMM to MRA} \label{Sc-3}

\subsection{Moments}
The first two moments of {the MRA model}~\eqref{Eq-1-1} are given by~\cite{Abbe2017}: 
\begin{equation*}
M_1(x,\rho)  = x * \rho = C_x \rho = C_\rho x,
\end{equation*}
\begin{equation*}
M_2(x,\rho)  = C_x D_\rho C_x^T + \Sigma.
\end{equation*}
Here, $*$ denotes a convolution, $C_x$ is a circulant matrix whose first column is~$x$, $D_\rho$ is a diagonal matrix whose diagonal consists {of} the entries of~$\rho$, and~$\Sigma$ is the covariance matrix of the noise. These moments can be estimated from the observations by the empirical moments $\hat{M}_1$ and $\hat{M}_2$ as in~\eqref{Eq-2-6}. By the law of large numbers, if~$N$ is large enough then $\hat M_1\approx M_1$ and $\hat M_2\approx M_2$.

In~\cite{Abbe2017}, it was shown that the first two moments suffice to recover the orbit of~$x$ and~$\rho$, for {almost any non-uniform distribution}~$\rho$ {and} if the discrete Fourier transform~(DFT) of~$x$ is non-vanishing. Therefore, a natural candidate for the moment function is 
\begin{equation} \label{Eq-3-1}
f(x,\rho, y_i) = 
\begin{pmatrix}
M_1(x, \rho) - y_i\\
M_2(x, \rho) - y_i y_i^T
\end{pmatrix}.
\end{equation}

Since the second moment is a symmetric matrix, we remove all recurrent entries (e.g., {eliminate} the left lower  triangle of the second moment matrix). Empirically, {this} modification results in a full-rank covariance~$S$. With a slight abuse of notation, we continue using the notation of~\eqref{Eq-3-1} after removing the recurrent entries. Thus, ${f} \in\R^{L+(L+1)L/2}$.

Before diving into the numerical results, we need to verify that the MRA model~\eqref{Eq-1-1} and the moment function~\eqref{Eq-3-1} satisfy Assumptions~\ref{As-2-2} and~\ref{As-2-1}. First, we note that Assumption~\ref{As-2-2} is satisfied since the first two moments determine the orbit of the signal uniquely (under the aforementioned conditions). Let us define a group element {$a_i$} acting on $\theta = [\rho; x]$ by ${a_i}\circ[\rho;x] = [R_{-i}  \rho; R_i  x]$. Since $M_1$ and $M_2$ are invariant under this group action, the  moment function~\eqref{Eq-3-1} is  invariant as well, and thus Assumption~\ref{As-2-1} holds.

\subsection{Experimental setting}
Due to the inherent symmetry of the MRA model, we define the relative error as:
\begin{equation}\label{Eq-3-2}
\operatorname{error}(x, \hat{x}) = \min_{0\le s \le L-1} \frac{\| R_s \hat{x} - x \|_2}{\|x \|_2},
\end{equation}
{where $\hat{x}$ is the signal estimate.}
The SNR is computed as
\begin{equation*}
\SNR = \frac{||x||^2_2}{\operatorname{Trace}(\Sigma)},
\end{equation*}
where~$\Sigma$ is the covariance matrix of the noise term in~\eqref{Eq-1-1}. 

For each SNR value, we conducted~$100$ trials. In each trial, we sampled a signal of length $L=15$, drawn from a normal distribution $\mathcal{N}(0,I)$, and then normalized it {such that $||x||_2=1$.} 
{We use low-dimensional signals since, in this regime, the MoM has clear computational advantages over methods that maximize the likelihood function, see Section~\ref{Sc-2-1}. In addition, it was recently shown that  for high-dimensional signals, in contrast to the low-dimensional case, the sample complexity of the MRA model is not governed by moments~\cite{romanov2020multi}}.
The distribution~$\rho$ was drawn {uniformly} as an element over the simplex. {Uniform distribution is merely a point on the continuous simplex, and therefore the distribution is almost surely non-uniform.} Using the sampled signal and distribution, we generated $N=100,000$ observations according to~\eqref{Eq-1-1}.

In the experiments, we compare  the GMM estimator~\eqref{Eq-2-1} with the classical MoM, corresponding to GMM with~$W=I$; we refer to the latter as the LS estimator. We implemented both estimators using the interior point solver of MATLAB. Since the {scale of the relative error} changes drastically for different SNR levels, {we measure the ratio between the relative errors of the GMM and the LS estimators.}
In the figures {presented in this section}, {the blue and red lines represent, respectively,  the mean and median of {this} relative error ratio}. For clarification, when the ratio is greater than~1, the {LS's} relative error is bigger than the relative error of the GMM estimator. Namely, the GMM outperforms the LS. The upper and lower limits of the boxes denote the {75th}  and {25th} percentiles, respectively. The dots represent trials whose ratios lie below the first quartile or above the third quartile. Lastly, the dashed line represents {the LS's mean relative error}, which corresponds to the right vertical axis. We focus on the range of $\SNR$ {levels corresponding} to relative errors that are smaller than~$1$. The code to reproduce {this section} experiments is publicly available at {\url{https://github.com/abasasa/gmm-mra}.}
 
\subsection{Homoscedastic noise}
We {start with} a homoscedastic noise model {where},  $\varepsilon_i \overset{i.i.d.}{\sim} \mathcal{N}(0,\sigma^2I)$; this noise model was considered by all previous works on MRA, e.g.,~\cite{Abbe2017,Bendory2018,bandeira2020non}.	

\begin{figure}
	\centering
	\includegraphics[width= 0.45\textwidth]{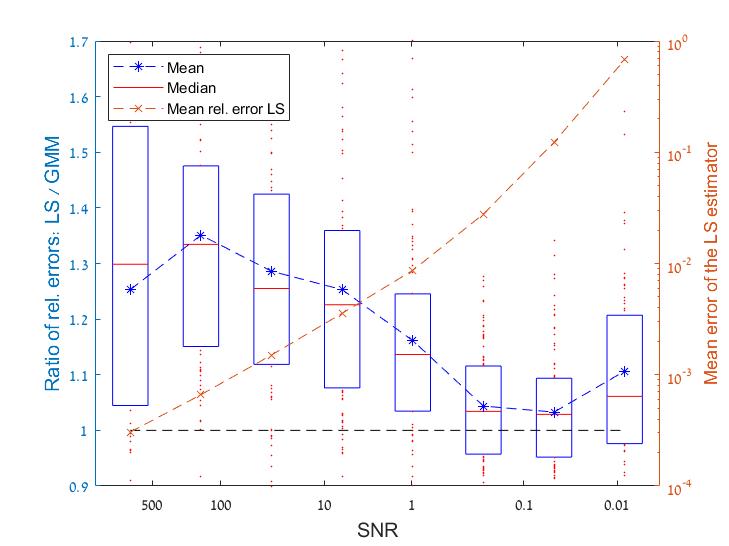}
	\caption{\textbf{Homoscedastic noise model}: The ratio between the relative errors of the LS and the GMM {estimators}.}
	\label{Fig-3-1}
\end{figure}

The first experiment compares the performance of the  LS and GMM estimators. The results are presented in Figure~\ref{Fig-3-1}. {As expected, the mean relative error of the estimators decreases as the SNR increases, as illustrated by the red dashed line.}
For high SNR levels, {the} GMM estimator outperforms the LS estimator by more than~$20\%$. In low SNR levels, the GMM estimator {has} only a slight advantage over {the} LS {estimator}. 

{
\subsection{The effect of adding the third moment}
This paper is mainly focused on using the first two moments: the minimal number of moments required  for signal estimation when the distribution of translations is non-uniform~\cite{Abbe2017}. This is also the number of moments we use for the cryo-EM experiments in Section~\ref{Sc-4}. 

We now examine a natural question: what is the effect of  adding the third moment to the moment function~\eqref{Eq-3-1}? 
Figure~\ref{Fig-M3-1} compares the performance of the GMM estimator with two and three moments. Adding the third moment indeed leads to significant improvement for all  SNR levels, especially when the SNR is high. The superior numerical performance of adding the third moment to the moment function comes at a cost: the dimensionality (and thus the computational load) grows from $O(L^2)$ to  $O(L^3)$. In particular, in our experiments, adding the third moment has increased the  running time  by a factor ranging from~1.5 to~5 (depending on the SNR). Figure~\ref{Fig-M3-2} compares the performance of the GMM and LS estimators with the third moment. The trend of the ratio of relative errors is similar (besides a single point) to the same experiment with only the first two moments, as presented in Figure~\ref{Fig-3-1}. 

 For the rest of the paper, we continue investigating the  GMM estimator using only the first two moments.
}



 
 

{

\begin{figure}[ht]
	\centering
	\includegraphics[width=.45	\textwidth]{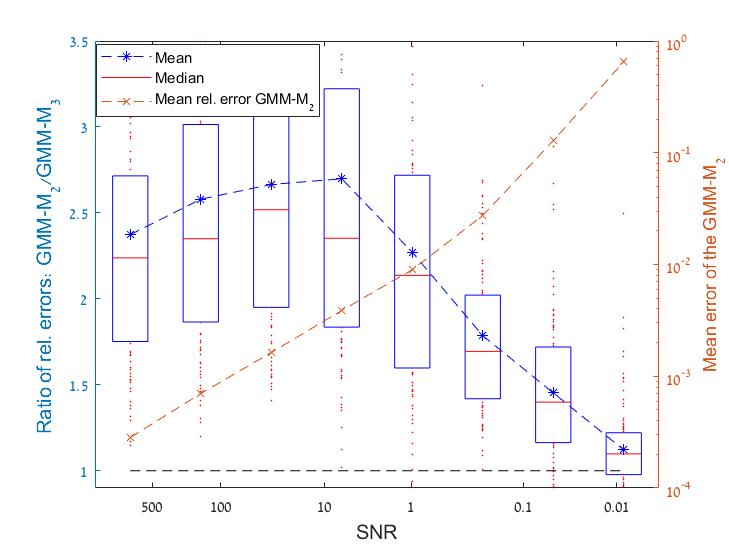}
	\caption{\textbf{The effect of adding the third moment}: The ratio between the relative errors of the GMM estimator with the first three moments (GMM-$M_3$) and the GMM estimator with only the first two moments (GMM-$M_2$). Evidently, the GMM with three moments outperforms the GMM with two moments.
	}
	\label{Fig-M3-1}
\end{figure}

\begin{figure}[ht]
	\centering
	\includegraphics[width=.45	\textwidth]{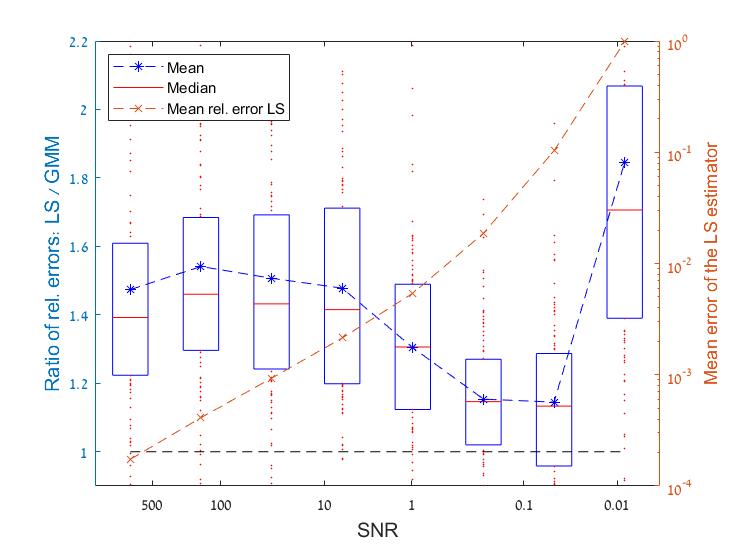}
	\caption{\textbf{Homoscedastic noise model with the third moment}: The ratio of the relative errors of the LS and the GMM estimators when the moment function includes the third moment.}
	\label{Fig-M3-2}
\end{figure}

}

\subsection{Heteroscedastic noise}
Next, we investigate a scenario in which the noise term is distributed as  $\varepsilon_i \overset{i.i.d.}{\sim} \mathcal{N}(0, \Sigma_1)$, where $\Sigma_1$ is a diagonal matrix given by
\begin{equation} \label{Eq-3-3}
\Sigma_1 = 
\begin{bmatrix}
\sigma^2 & & \\
& 2  \sigma^2 & & \\
& & \ddots & \\
& & & L \sigma^2
\end{bmatrix}.
\end{equation}
In this case, the noise level
increases along the {signal's entries}. This model is similar to a popular noise model in cryo-EM~\cite{Bendory2020}.
{Figure~\ref{Fig-3-2} compares the performance of the GMM and the LS methods and shows that the GMM outperforms the LS by at least~$30\%$ for most SNR levels. Note that the performance of the LS estimator is similar in both the homoscedastic and heteroscedastic {noise} models.}

\begin{figure}[ht]
	\centering
	\includegraphics[width=.45	\textwidth]{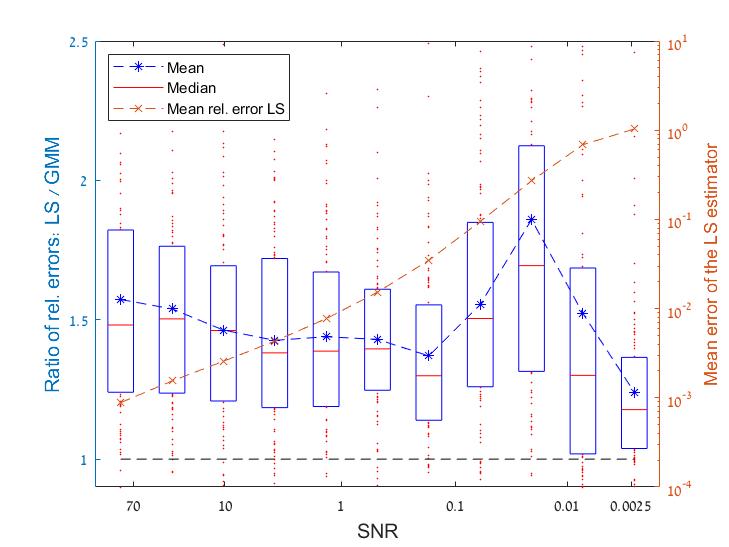}
	
	\caption{\textbf{Heteroscedastic noise model}: The ratio between the relative errors of the LS and the GMM {estimators}.}
	\label{Fig-3-2}
\end{figure}

\subsection{When do we expect the GMM to outperform the MoM?}\label{Sc-3-1}

In the heteroscedastic noise model, the dispersion of eigenvalues is {larger} than in the homoscedastic case. Therefore, the effect of including the weighting matrix~$W_N$ in~\eqref{Eq-2-1} leads to an {increased} performance gap between the LS and the GMM. Accordingly, we conjecture that in general, if~$W$ is ``far''  from the identity matrix {in terms of} {large} dispersion of eigenvalues, then the performance of the GMM will be significantly better compared to the LS estimator. 
{To this end, we define the distance between a matrix~$A\in\R^{n\times n}$ and the identity matrix {by}
	\begin{equation}\label{Eq-3-11}
		\delta(A) := \left[\sum_{j=1}^n \log^2\left(\frac{\sqrt{n}\lambda_j(A) }{||A||_\text{F}}\right)\right]^{\frac{1}{2}},
	\end{equation}
	where $\lambda_j(A)$ is the~$j$-th eigenvalue of~$A$. 
This measure is based on  the geodesic distance with respect to the Riemannian metric over the cone of PD matrices~\cite{moakher2005differential}, with one additional normalization factor $\frac{\sqrt{n}}{||A||_\text{F}}$. This factor assigns the same distance for matrices {that} are equal up to a scalar multiplication, i.e., $\delta(A) = \delta(cA)$ for any positive $c > 0$. 
}
In weighted LS, multiplication of the objective function by a scalar does not affect the estimator, {and thus} the modification is needed.

We compare the measure~\eqref{Eq-3-11} of the weighting matrix~\eqref{Eq-2-7}, for the homoscedastic and heteroscedastic noise models; the results are presented in Figure~\ref{Fig-3-5}. For the homoscedastic noise model, the minimum point of the geodesic distance is around $\text{SNR}\sim 0.1$. {Indeed, }Figure~\ref{Fig-3-1} shows that at this noise level, the performance of the GMM estimator performs similarly to the LS estimator. In addition, we observe that {in high SNR homoscedastic noise and under heteroscedastic noise}, large geodesic distance is positively correlated with superior performances of the GMM compared to LS. These observations support our hypothesis that the  {geodesic} distance from the identity can be used as a heuristic tool to predict when it would be beneficial to use the GMM.

\begin{figure}[ht]
	\centering
	\includegraphics[width=.45	\textwidth]{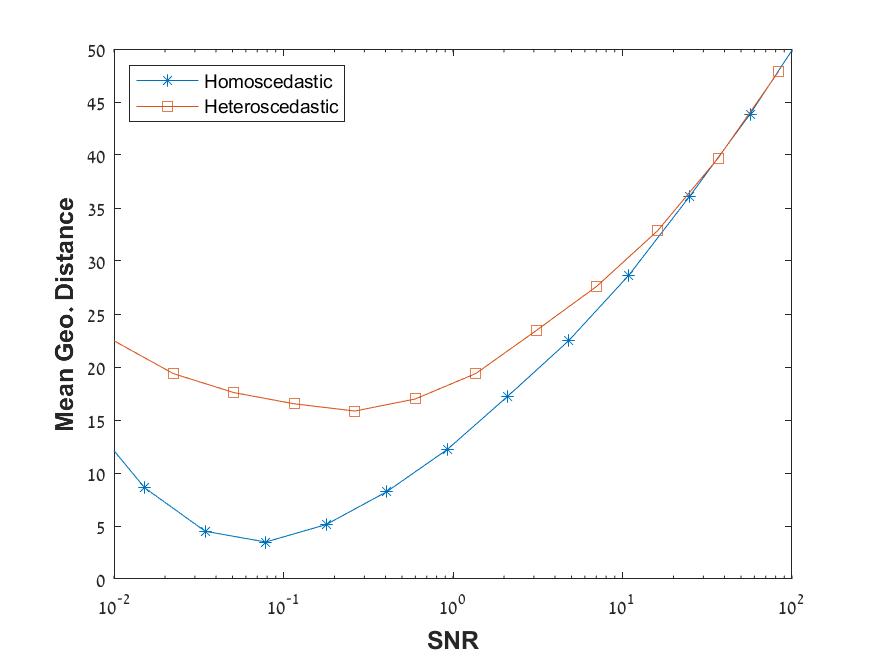}
	\caption{{Modified} geodesic distance~\eqref{Eq-3-11}  of the weighting matrix~$W$ from the identity matrix {as a function of the SNR}. The two curves are of the homoscedastic and heteroscedastic noise models, averaged over $20$ trials. 
	{	
		In the homoscedastic noise model with $\text{SNR} \approx 0.1$, where the geodesic distance between~$W$ and the identity matrix is small, the   performance of the GMM and the LS estimators is similar.}}
	\label{Fig-3-5}
\end{figure}

\subsection{MRA with outliers}
Motivated by the abundance of outliers in cryo-EM datasets~\cite{Bendory2020}, we consider the following generative model
\begin{equation}\label{Eq-3-4}
y_i = \begin{cases}
R_{S_i}x + \epsilon_i \quad \text{  w.p.  } 1-p_{\text{out}},\\
\mathcal{N}(0,\Sigma_{\text{out}}) \quad \text{w.p. } p_{out},
\end{cases}
\end{equation}
where $p_{\text{out}}$ is the probability of an observation to be an outlier and $\Sigma_{\text{out}}$ is the covariance matrix of the outliers' distribution.

Under this statistical model, the analytical moments read 
\begin{equation*}
\begin{gathered}
M_1(x,\rho) = (1-p_{\text{out}})  C_\rho x,\\
M_2(x,\rho) =(1-p_{\text{out}})  C_x D_\rho C_x^T + (1-p_{out}) \Sigma + p_{\text{out}} \Sigma_{\text{out}}.
\end{gathered}
\end{equation*}
We assume that $p_{\text{out}}$ and $\Sigma_{\text{out}}$ are known. As before, for large $N$, the empirical moments approximate the analytical moments.

\subsubsection{Weighted LAD optimization}
Robustness to outliers is commonly obtained by replacing the LS objective with the weighted \textit{least absolute deviations} (LAD) objective function~\cite{koenker1978regression}, given by
\begin{equation}\label{Eq-3-8}
\hat{\theta}_N^{LAD} =  \arg\min_{\theta \in \Theta} |{W_{\text{vec}}}| \cdot |h(\theta, \{y_i\}_{i=1}^N)|.
\end{equation}
Here, {$|W_{\text{vec}}|$} is a fixed weighting vector, $\cdot$ is the scalar product, the absolute value is taken entrywise, and $h(\theta, \{y_i\}_{i=1}^N)$ is the loss function.    
Intuitively, one might suggest {defining $h$ of~\eqref{Eq-3-8} as} $g_N$ from~\eqref{Eq-2-5}. However, it was shown in~\cite{Jong2002} that the estimation error of standard GMM based on~$\ell_2$ norm is lower than any other~$\ell_p$ norm for a moment function that satisfies $\E\left[f(\theta_0, y_i)\right] = 0$\footnote{This is not necessarily true for a biased moment function.}. 
Therefore, we take a different approach. 

\subsubsection{Geometric median estimator}
We present an estimator based {on} the geometric median of the moments. The geometric median is a generalization of the univariate median to the multidimensional case, and is defined as
\begin{equation}\label{Eq-3-6}
\widehat{GM}(\{y_i\}_{i =1 }^N) = \arg\min_{z \in \R^{q}} \sum_{i=1}^N {\left\| z - \begin{pmatrix}
y_i\\ 
y_i y_i^T
\end{pmatrix}\right\|_2},
\end{equation}
where $q = L + (L+1)L/2$, the total number of elements of the first and second moment. Notice that as before, we treat $y_i y_i^T$ as a vector in $\R^{(L+1)L/2}$. The  geometric median does not have a closed-form solution, and we approximate it using the {Weiszfeld} algorithm~\cite{weiszfeld2009point}.

We estimate the empirical moments by~\eqref{Eq-3-6}, and use it in the loss function of the weighted LAD~\eqref{Eq-3-8},
\begin{equation}\label{Eq-3-5}
h_{GM}(x, \rho, \{y_i\}_{i=1}^N) = \begin{pmatrix}
M_1(x,\rho)\\
M_2(x,\rho)
\end{pmatrix} - \widehat{GM}(\{y_i\}_{i =1 }^N).
\end{equation}
We name this estimator the \textit{geometric median estimator}.
In order to construct~{$W_{\text{vec}}$}, we use an alternating optimization scheme with two phases. The first phase minimizes~\eqref{Eq-3-5} for a fixed~{$W_{\text{vec}}$} (initialized by vectors of ones). Then, we update~{$W_{\text{vec}}$} using the current estimation of~$x$ and~$\rho$; see Algorithm~\ref{Alg-4-1}. As mentioned in Section~\ref{Sc-3-1}, a multiplication of~{$W_{\text{vec}}$} by a scalar does not affect the GMM estimator, {and} therefore we minimize~{$W_{\text{vec}}$} over the sphere in $\R^{L+(L+1)L/2}$, {denoted here by} $S^{L+(L+1)L/2-1}$, using Manopt's BFGS solver~\cite{manopt}. {Empirically,   updating~{$W_{\text{vec}}$} once suffices}.
\begin{algorithm} 
	\SetAlgoLined
	\KwOut{{The estimated parameter vector $\hat{\theta}$ }}
	\textbf{Input:}:  A set of observations $\{y_i\}_{i=1}^N$, a loss function $h$, and number of steps $n$ 
	\begin{enumerate}
		\item 	Initialize ${W_{\text{vec}}}$ (a common choice is a vector of ones)
		\item 	Solve $\hat{\theta} = \underset{\theta \in \Theta}{\operatorname{argmin}} |{W_{\text{vec}}}| \cdot |h(\theta,\{y_i\}_{i=1}^N)|$ (first iteration)
		\item \For{j from 2 to n}{			
			${W_{\text{vec}}} = 			\underset{W \in S^{L+(L+1)L/2-1}}{\operatorname{argmin}} |W|\cdot  |h(\hat{\theta},\{y_i\}_{i=1}^N)|$ \newline
			$\hat{\theta} = \underset{\theta \in \Theta}{\operatorname{argmin}}|{W_{\text{vec}}}| \cdot |h(\theta,\{y_i\}_{i=1}^N)|$
		}
	\end{enumerate}	
	\caption{{Alternating weighted LAD}}
	\label{Alg-4-1}
\end{algorithm}

\subsubsection{Numerical Results}
The geometric median estimator is compared against the GMM  in Figure~\ref{Fig-3-4} with $p_{\text{out}} = 0.2$ and $\Sigma_{\text{out}} = \frac{100}{L \cdot \text{SNR}} I$, where $L=15$ is the signal length. As can be seen, for low SNR levels, the geometric median estimator exhibits superior numerical performance, while GMM works better in a high SNR environment. We observed similar results for different~$p_{\text{out}}$ values, however the SNR level of the transition point varies.

\begin{figure}[ht]
	\centering
	\includegraphics[width= 0.45\textwidth] {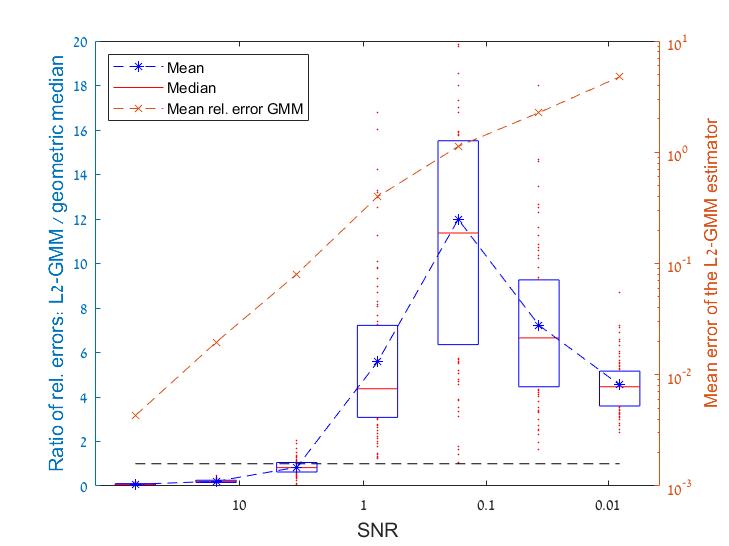}
	
	\caption{\textbf{Outliers noise model~\eqref{Eq-3-4}}: The ratio between the relative error of the GMM and the geometric median estimator for the outliers noise model~\eqref{Eq-3-4} with $p_{\text{out}} = 0.2$.} 
	\label{Fig-3-4}
\end{figure}

\subsection{MRA with projection}\label{Sc-3-2}
The MRA with projection model is an extension of the standard MRA model~\eqref{Eq-1-1}, which includes an additional {linear} {operator} acting on the shifted signal. In this model, the~$i$-th observation is given by
\begin{equation} \label{Eq-3-7}
y_i = P R_{s_i}x + \varepsilon_i,
\end{equation}
where $P$ is a fixed, known matrix of size $K \times L$. As in~\eqref{Eq-1-1}, the goal is to estimate $x\in\R^L$ from $y_1,\ldots,y_N\in \R^K$. The first and second moments of~\eqref{Eq-3-7} are given by
\begin{equation*}
\begin{split}
M_1(x,\rho) {}&= P C_\rho x,\\
M_2(x,\rho) &= P C_x D_\rho C_x^T P^T + P \Sigma P^T.
\end{split}
\end{equation*}
	Motivated by cryo-EM, we focus on a matrix~$P$ that samples only the first~$K$ entries of the shifted signal. That is, {$PR_sx$} is a vector of length~$K$ consists of the entries {$[(R_sx)_0,\ldots,(R_sx)_{K-1}]$.}

{Figure~\ref{Fig-3-3} compares the performance of the GMM and {the} LS estimators as a function of~$K$ (namely, how many entries are being kept) for a fixed noise level $\Sigma = {10^{-2} \cdot \, I_{K\times K}}$. As can be seen, the GMM outperforms the LS by at least~$20\%$.
}
%
\begin{figure}[ht]
	\centering
	\includegraphics[width=.45	\textwidth]{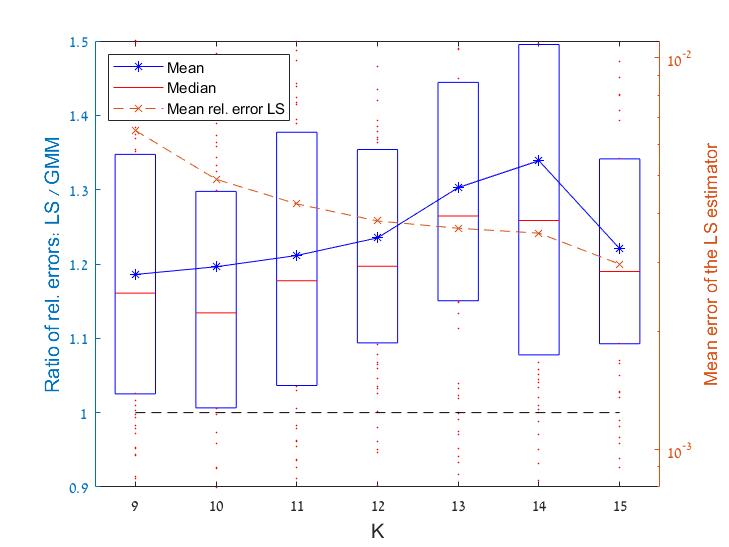}
	
	\caption{\textbf{MRA with projection:} The ratio between the relative errors of the GMM and the LS estimators as a function of the measurement length $K$. The {covariance of the noise is} $\Sigma = {10^{-2} \cdot \, I_{K\times K}}$.}
	\label{Fig-3-3}
\end{figure}

\section{GMM for cryo-EM: a proof of concept}\label{Sc-4}

As aforementioned, the main motivation of this work stems from the task of reconstructing the 3-D structure of molecular structures using single-particle cryo-EM. Building upon~\cite{Sharon2020}, we formulate the GMM for cryo-EM and show preliminary numerical results.
{We note that the goal of the MoM in the cryo-EM literature is not to reconstruct a high-resolution molecular structure, but only to quickly constitute a low-resolution, ab-initio model~\cite{Bendory2020}.}

\subsection{Mathematical background}
Under some simplifying assumptions, the cryo-EM problem involves recovering {a} 3-D  volume~$\phi\colon\R^3 \to \R$ from a set of 2-D {tomographic} projection images. Each observation (projection image) is modeled as~\cite{Frank2006}
\begin{equation}\label{Eq-4-6}
I_j = P(R_j^T\circ \phi) + \varepsilon_j,
\end{equation}
where the projection operator $P\colon\R^3 \to\R^2$ is 
\begin{equation*}
P\phi(x_1,x_2) = \int_{-\infty}^{\infty} \phi(x_1,x_2,x_3)dx_3.
\end{equation*}
{The term $\varepsilon_j$ {models} additive noise term,} and~$R_j$ is an element of the group of 3-D rotations~$SO(3)$, which can be represented as a~$3\times 3$ orthogonal matrix acting by
\begin{equation*}
[R_j^T \circ \phi(x_1,x_2,x_3)] = \phi(R_j [x_1 \ x_2 \ x_3]^T).
\end{equation*}
As in the MRA model~\eqref{Eq-1-1}, the group elements $R_j$ are unknown. We assume that each image is sampled on an~$n \times n$ Cartesian grid within the box $[-1,1]\times [-1,1]$. 

The Fourier slice theorem states that the 3-D Fourier transform of a tomographic projection is equal to a slice of the Fourier transform of the volume. Thus, the Fourier transform of~\eqref{Eq-4-6} reads 
\begin{equation} \label{Eq-4-1}
\begin{gathered}
\widehat{I}_j(z_1,z_2) = 
(R_j^T \circ \widehat{\phi})(z_1, z_2, z_3)|_{z_3 =0} + \widehat{\varepsilon}_j.
\end{gathered}
\end{equation}
The common generative model of cryo-EM includes additional complications---such as the microscope's point spread function and heterogeneous mixture of molecules~\cite{Bendory2020}---which are {disregarded} here for simplicity.

Following~\cite{Sharon2020}, we formulate the problem of recovering the 3-D structure~$\phi$ from the first two moments.
{In polar coordinates, we can write the first empirical moment as}
\[
\hat{M}_1(r_1,\varphi_1) = \frac{1}{N} \sum_{j=1}^N \widehat{I}_j(r_1,\varphi_1), \]
{and the second empirical moment}
\begin{multline*} 
\hat{M}_2(r_1,\varphi_1,r_2,\varphi_2) = \\
\frac{1}{N} \sum_{j=1}^N \widehat{I}_j(r_1,\varphi_1,r_2,\varphi_2) \widehat{I}_j(r_1,\varphi_1,r_2,\varphi_2).
\end{multline*}
The first and second moment can be computed analytically
\begin{align} \label{Eq-4-3}
\begin{split}
M_1(\phi, \rho) &= \E\left[\widehat{P(R^T \circ \phi)}\right], \\ 
M_2(\phi, \rho) &= \E\left[\widehat{P(R^T \circ \phi)} \otimes \widehat{P(R^T \circ \phi)}\right],
\end{split}
\end{align}
where the expectation is taken over {the rotations and the noise.} {The explicit expressions of $M_1$ and $M_2$ of~\eqref{Eq-4-3} {are} provided in Appendix~\ref{Ap-6}.}
\begin{figure*}[t]
	\centering
	\begin{subfigure}[b]{0.45\textwidth}
		\centering
		\includegraphics[width=\textwidth]{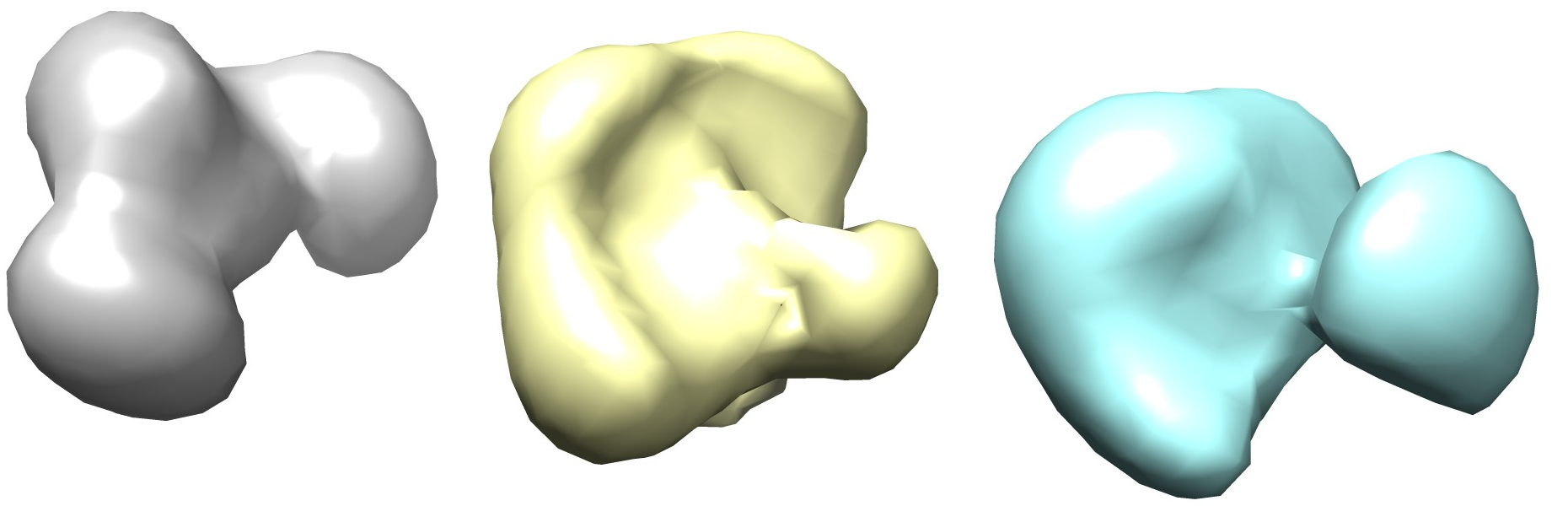}		
	\end{subfigure}
	\hfill
	\begin{subfigure}[b]{0.45\textwidth}
		\centering
		\includegraphics[width=\textwidth]{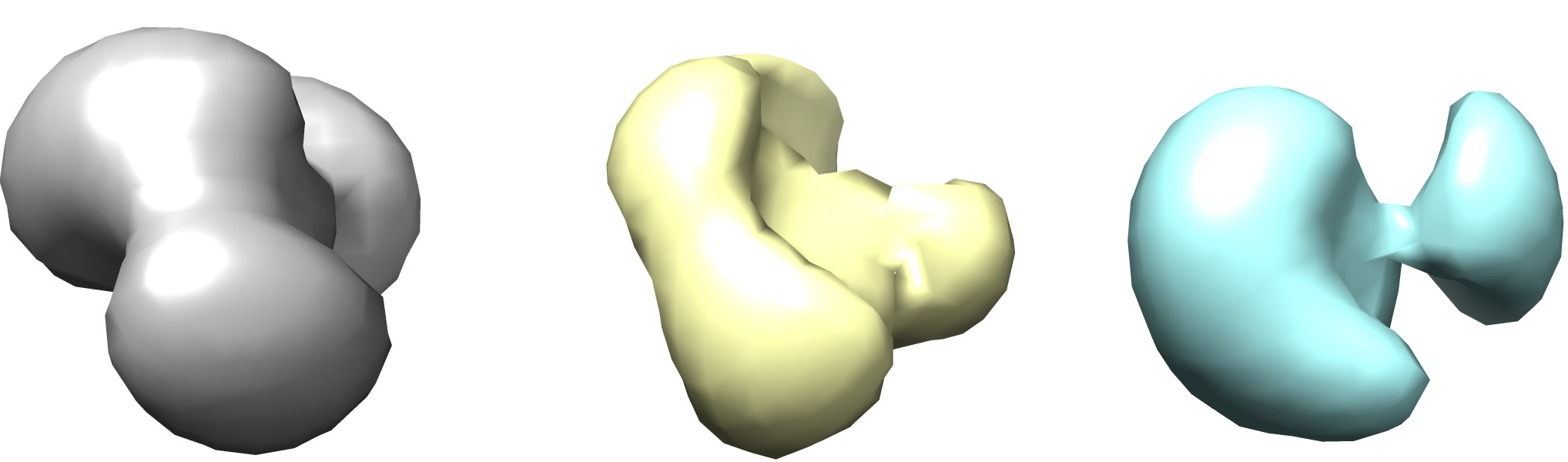}		
	\end{subfigure}
	\begin{subfigure}[b]{0.45\textwidth}
		\centering
		\includegraphics[width=\textwidth]{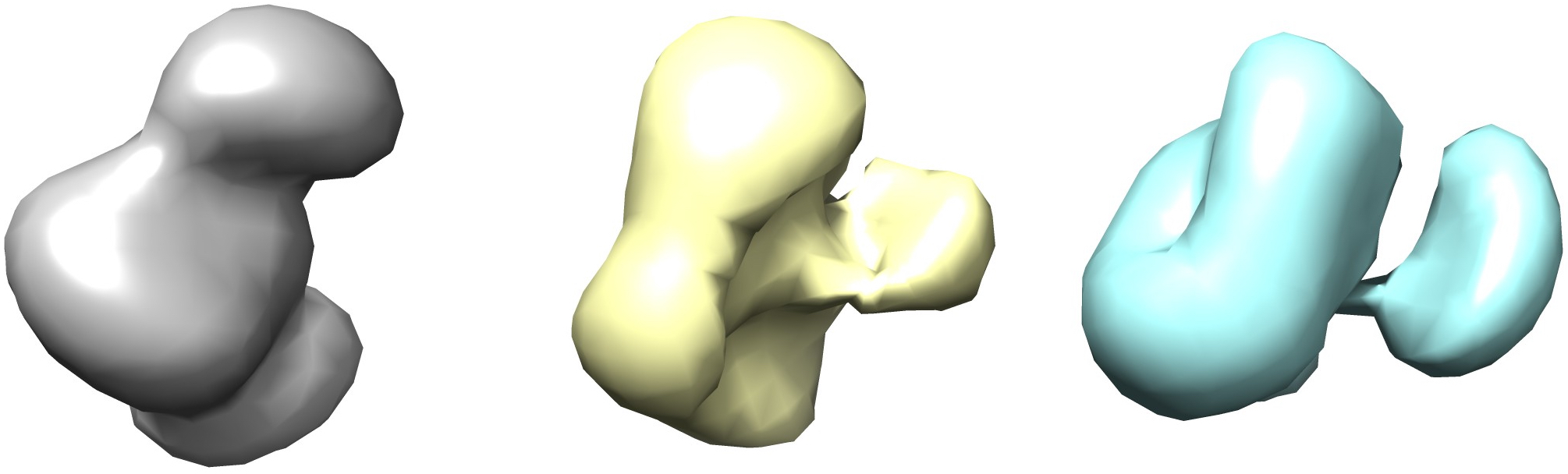}
	\end{subfigure}
	\caption{Reconstructing a 3-D structure from noisy cryo-EM projections with $\SNR {\approx} 0.4$. We present the structures from three different  viewing directions. The ground truth, the LS estimation (MoM), and the GMM estimation appear in gray, yellow, and blue, respectively. The GMM outperforms the LS in both the FSC and relative error criteria. Its FSC value and relative error are~$4.2$ and~$0.58$, respectively. The FSC value of the LS estimator is~$4.5$, and its relative error is~$0.61$. 
	}		
	\label{Fig-4-3}
\end{figure*}

\subsection{Numerical results}
\subsubsection{Evaluation metrics}
{We used two metrics: relative error and Fourier shell correlation (FSC)}. The relative error is computed  similarly to~\eqref{Eq-3-2}, except that the alignment is computed over the group of {3-D} rotations~$SO(3)$. FSC is a common resolution measure in the cryo-EM field~\cite{harauz1986exact}. It measures cross-correlation coefficients between two 3-D volumes over corresponding shells in Fourier domain. Given two volumes, $\varphi_1$ and $\varphi_2$, the FSC in a shell~$\kappa$ is calculated using all voxels~$\boldsymbol{\kappa}$ on this~$\kappa$-th shell:
\begin{equation*}
\operatorname{FSC}(\kappa) = \frac{\sum_{||\boldsymbol{\kappa}|| = \kappa}|\varphi_1(\boldsymbol{\kappa})||\overline{\varphi_2(\boldsymbol{\kappa})}|}{\sqrt{\sum_{||\boldsymbol{\kappa}|| = \kappa} |\varphi_1(\boldsymbol{\kappa})|^2  \sum_{||\boldsymbol{\kappa}|| = \kappa} |\varphi_2(\boldsymbol{\kappa})|^2}} .
\end{equation*}
Usually, FSC curves decrease with~$\kappa$, and the resolution is determined as the point where the FSC curves drop below a pre-specified value.
In this work, we use a threshold of~$0.5$~\cite{VANHEEL2005250}. Therefore, higher resolution is indicated by a smaller FSC value.

\subsubsection{Example}
Following~\cite{Sharon2020}, the LS estimator is formulated as 
\[\arg\min_{\phi, \rho} \|M_1(\phi, \rho) - \hat{M}_1 \|_\text{F}^2 + \lambda \|M_2(\phi, \rho) - \hat{M}_2\|_\text{F}^2,\] where~$\lambda$ is a regularization parameter. As in the MRA problem, we define the moment function of the GMM estimator as
\begin{equation} 
f(x,\rho, I_j) = 
\begin{pmatrix}
M_1(\phi, \rho) - \hat{I}_j\\
M_2(\phi, \rho) - \hat{I}_j^{\otimes 2} + B
\end{pmatrix},
\end{equation}
where $B$ is an unbiasing term. We removed redundant entries, due to the inherent symmetries of the moments. In order to estimate the volumes, we used Manopt's trust-regions solver~\cite{manopt}.

In the following experiment, the volume is a toy model, composed {of} five Gaussians over~$\R^3${, whose high-frequencies were removed (see the blue volumes in Figure~\ref{Fig-4-3})~\cite{Sharon2020}}. The toy model is sampled on a~$23\times 23 \times 23$ Cartesian grid. The distribution of rotations~$\rho$ was drawn randomly such that~$\rho$ is invariant to in-plane rotations, i.e., $\rho$ depends only on the viewing direction. We generated $N = 200,000$ observations according to~\eqref{Eq-4-6},
	{and added an i.i.d. white Gaussian noise, corresponding to $\SNR \sim 0.4$.}
	 Due to the simulated nature of the volume, the FSC's resolution {units} are measured in $1/\text{pixel's length}$.

The resulted volumes are depicted in Figure~\ref{Fig-4-3}. {The} GMM outperforms the LS in both criteria. Its FSC value and relative error are~$4.2$ and~$0.58$, respectively. The FSC value of the LS estimator is~$4.5$, and its relative error is~$0.61$. {The code is publicly available at~\url{https://github.com/ABASASA/GMM-Cryo}.}

\subsection{Ill-conditioning of {the} weighting matrix}\label{Sc-4-4}
The condition number of the GMM's ideal weighting matrix~$W$ in the MRA model was around~$10^4$. Unfortunately, the condition number of~$W$ in the cryo-EM experiments was much higher. Figure~\ref{Fig-4-1} presents both the condition number of~$W$ and the geodesic distance between~$W$ and the identity matrix. As {can} {be} seen, as the {SNR} increases, so are the condition number and the distance to the identity matrix. In particular, for most SNR levels, $W$ is  ill-conditioned.
{This phenomenon must be considered when applying GMM to cryo-EM experimental datasets, as discussed in the next section.}

\begin{figure}
	\centering
	\includegraphics[width= 0.45\textwidth]{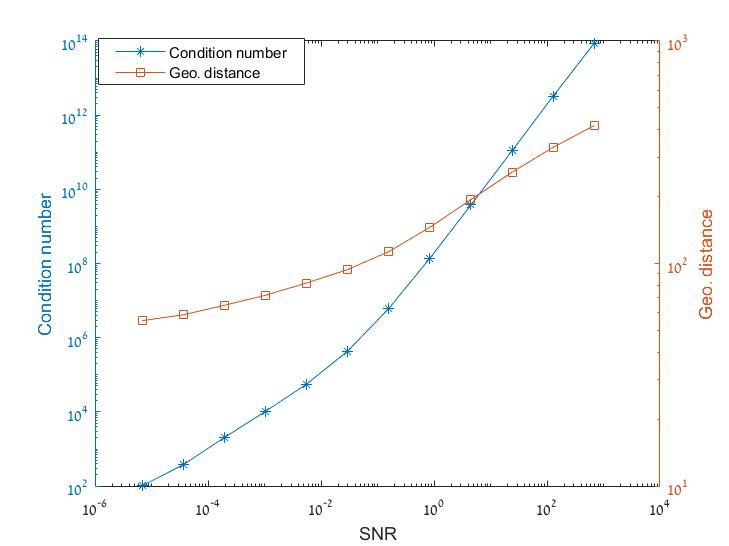}
	\caption{The condition number and the geodesic distance~\eqref{Eq-3-11}, for the cryo-EM model~\eqref{Eq-4-6}, of the GMM weighting matrix (computed at the ground truth) for different levels of white noise.}
	\label{Fig-4-1}
\end{figure}

\section{Discussion}

This paper is part of an ongoing effort to harness the favorable computational properties of the method of moments for constituting ab initio models of molecular structures using cryo-EM~\cite{Bendory2018a, Sharon2020,kam1980reconstruction, levin20183d,lan2021random}. We use several MRA models as test cases and show that the GMM outperforms the classical method of moments. Also, we prove that the GMM retains its optimal statistical properties even when the signal is determined up to a group action, as in MRA and  cryo-EM.

Our ultimate goal is to apply the GMM technique to constitute 3-D molecular structures using cryo-EM. 
 However, our study reveals a major computational challenge: the moment function’s covariance matrix is severely ill-conditioned. Thus, devising rigorous techniques to reduce the condition number is a future research direction toward applying the GMM framework to experimental cryo-EM datasets.

In a broader perspective, we intend to study the GMM for additional signal processing tasks in which the method of moments plays an important role. Examples include super-resolution~\cite{bendory2020super} and multi-target detection~\cite{bendory2019multi, lan2020multi,bendory2021multi}.

\section*{Acknowledgment}
{We are grateful to  Prof. Arie Yeredor for his deep insights that significantly improved this work.} 

\bibliographystyle{unsrt}

\appendix

\subsection{Additional assumptions of the GMM}\label{Ap-1}
{Note that by construction, $g_N$ of~\eqref{Eq-2-5} satisfies $\lim_{N\to\infty} g_N(\theta) =\E\left[f(\theta, y)\right]:= g(\theta).$}
In addition, we denote {the corresponding quadratic forms,}
\begin{equation*}
\begin{gathered}
Q(\theta) = g(\theta)^T  W  g(\theta), \quad
Q_N(\theta) = g_N(\theta)^T  W_N  g_N(\theta), 
\end{gathered}
\end{equation*}
{ and the Jacobians,
\begin{equation*}
\begin{gathered}
G(\theta) = \E\left[\partial f(\theta, y) / \partial \theta^T\right], \
G_N(\theta) =\frac{1}{N} \sum_{i=1}^N \partial f(\theta, y_i) / \partial \theta^T.
\end{gathered}
\end{equation*} }
Notice that {we also have} $\lim_{N\to\infty} Q_N(\theta) = Q(\theta)$ and $\lim_{N\to\infty} G_N(\theta) = G(\theta)$.

The favorable statistical properties of the GMM rely on {several} conditions, including the uniqueness of a set of parameters~\eqref{Eq-2-4}, see Section~\ref{Sc-2.5}. For completeness, we {now} present  the rest of the regularity conditions {required for Theorem~\ref{Thm-2-6}}~\cite{Hansen1982,Hall2005}. {For the first assumption, we say that a sequence of random variables with a joint cumulative distribution function $F$ is strictly stationary if, for any $\tau$, $n$, $t_1, \ldots, t_n \in \mathbb{N}$,
\[ F(y_{t_1}, y_{t_2},..., y_{t_n}) = F(y_{t_1 + \tau}, y_{t_2 + \tau},..., y_{t_n + \tau}). \]
The above sequence (or process) of random variables is called ergodic if 
$\E\left[ ||\frac{1}{N}\sum_{i = 1}^N y_i-\E\left[y_i\right] ||^2 \right] \to 0 $, as $N\to \infty$.}

\begin{assumption} \label{As-Ap1-1*}
	The sequence $\{y_i \mid i \in \mathbb{N}\}$ {is strictly stationary and ergodic.}	
\end{assumption}

\begin{assumption} \label{As-Ap1-2*}
	(i) $\Theta$ is a compact space and (ii) $\theta_0$ is an interior point in $\Theta$.	
\end{assumption}

\begin{assumption}[Regularity conditions for $f$]\label{As-Ap1-3*}
	The moment function, $f\colon \Theta \times  \mathbb{R}^r \to \mathbb{R}^q$, satisfies: (i) $f(\cdot, y)$ is continuous on $\Theta$ for every $y \in \mathbb{R}^r$; (ii) $g(\cdot)$ exists and finite for every $\theta \in \Theta$; (iii) $g(\cdot)$ is continuous on $\Theta$ and (iv) $\E\left[\sup_{\theta \in \Theta} \| f(\theta, y) \|\right]$ is finite.
\end{assumption}

\begin{assumption}[Properties of the weighting matrix]  \label{As-Ap1-5*}
	$\{W_N\}_{N=1}^\infty$ is a sequence of PSD matrices, and~$W$ is a PD matrix such that $W_N \overset{p}{\to} W$.
\end{assumption}

\begin{assumption}[Regularity conditions for $\partial f/\partial \theta^T$]\label{As-Ap1-4*}
	(i) The Jacobian matrix $\partial f(\theta, y)/\partial \theta^T$ exists and continuous on~$\Theta$ for every $y \in \mathbb{R}^r$ and (ii) $\E\left[\partial f(\theta, y)/\partial \theta^T\right]$ exists and finite.
\end{assumption}

\begin{assumption}[{Existence of the covariance}]\label{As-Ap1-6*} 
	(i) $\E\left[f(\theta, y)f(\theta, y)^T\right]$ exists and finite; in addition (ii) {$S$ of~\eqref{eqn:cov_mat_S} exists and it is a PD matrix}.
\end{assumption}

 \begin{assumption} [Continuity of $G$] \label{As-Ap1-7*} 
  $G(\theta)$ is continuous on some neighborhood $N_\delta$ of $\theta_0$ and its symmetries under the group $\group$, given by
	\begin{equation*}
	\begin{split}
	N_\delta(\theta_0) & = \{\theta \in \Theta: \exists a \in \group, \| \theta - a \circ \theta_0 \| < \delta \}  =  \\
	& =\cup_{a \in \group} \{\theta \in \Theta:  \| \theta - a \circ \theta_0 \| < \delta \}.
	\end{split}
	\end{equation*}
\end{assumption}
\begin{assumption}[Uniform Convergence of $G_N$] \label{As-Ap1-8*}
	For $N_\delta$ of Assumption~\ref{As-Ap1-7*}, $G_N(\theta)$ satisfies
	\[\sup_{\theta \in N_\delta} \|G_N(\theta) -  \E[f(\theta, y) / \partial \theta^T]\| \overset{p}{\to} 0.\]
\end{assumption}

\subsection{Proof of Theorem~\ref{Thm-2-1}}\label{Ap-2}

In this section, we {prove} consistency for the GMM estimator under Assumptions~\ref{As-2-2} and~\ref{As-2-1}. The proof is based on similar arguments as in~\cite{Fan2018} and~\cite{Hall2005}.

\begin{lemma}[Uniform Convergence in Probability of $Q_N(\theta)$] \label{Lemma-Ap2-1}
	Under Assumptions \ref{As-2-2}, \ref{As-2-1} and Assumptions \ref{As-Ap1-1*}, \ref{As-Ap1-2*}, \ref{As-Ap1-3*} and \ref{As-Ap1-5*}, we have
{	\begin{equation} \label{eqn:Qn_convergence}
	\sup_{\theta \in \Theta} |Q_N(\theta) - Q(\theta)| \overset{p}{\to} 0.
	\end{equation} }
\end{lemma}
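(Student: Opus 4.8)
The plan is to prove uniform convergence of $Q_N$ to $Q$ by decomposing the difference $Q_N(\theta) - Q(\theta)$ into two pieces: one that captures the discrepancy between the empirical moment map $g_N$ and its limit $g$, and one that captures the discrepancy between the weighting matrices $W_N$ and $W$. Writing $Q_N(\theta) = g_N(\theta)^T W_N g_N(\theta)$ and $Q(\theta) = g(\theta)^T W g(\theta)$, I would add and subtract the mixed term $g_N(\theta)^T W g_N(\theta)$, so that
\begin{equation*}
Q_N(\theta) - Q(\theta) = g_N(\theta)^T (W_N - W) g_N(\theta) + \bigl[ g_N(\theta)^T W g_N(\theta) - g(\theta)^T W g(\theta) \bigr].
\end{equation*}
The first term is controlled by $\|W_N - W\|$ (which tends to zero in probability by Assumption~\ref{As-Ap1-5*}) together with a uniform bound on $\|g_N(\theta)\|$, while the second term is a continuous quadratic in the argument and is controlled by $\sup_\theta \|g_N(\theta) - g(\theta)\|$ after a further add-and-subtract step.

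The key engine for both pieces is a uniform law of large numbers establishing $\sup_{\theta \in \Theta} \|g_N(\theta) - g(\theta)\| \overset{p}{\to} 0$. First I would invoke Assumption~\ref{As-Ap1-1*} (strict stationarity and ergodicity) so that a uniform ergodic theorem applies, combined with Assumption~\ref{As-Ap1-2*}(i) (compactness of $\Theta$), the continuity of $f(\cdot, y)$ from Assumption~\ref{As-Ap1-3*}(i), and the integrable envelope $\E[\sup_{\theta} \|f(\theta,y)\|] < \infty$ from Assumption~\ref{As-Ap1-3*}(iv). These are precisely the standard hypotheses of a uniform law of large numbers (a generic uniform convergence argument via compactness plus a dominating function), yielding the uniform convergence of $g_N$ together with the finiteness and continuity of the limit $g$ guaranteed by Assumption~\ref{As-Ap1-3*}(ii)--(iii). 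The envelope also gives a uniform-in-$\theta$ stochastic bound on $\|g_N(\theta)\|$, which is what I need to tame the first term above.

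A point worth emphasizing is the role of the symmetry assumptions here. Assumption~\ref{As-2-1} ensures that $f(\cdot,y)$, and hence $g_N$ and $g$, are constant along each orbit; this does not affect the uniform convergence argument itself, but it is the reason the weaker identification condition (Assumption~\ref{As-2-2}, global identification only up to an orbit) is compatible with the machinery, since the quadratic forms $Q_N$ and $Q$ inherit the same orbit-invariance. In this lemma the symmetry assumptions are therefore invoked mainly to confirm that the objects are well-defined and behave identically to the classical case; the genuine use of the orbit structure is deferred to the subsequent argument that extracts a convergent sequence $\{a_N\}$ from the argmin.

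The main obstacle is establishing the uniform law of large numbers under mere ergodicity rather than independence: one cannot simply apply a pointwise strong law and a finite-cover argument naively, because the standard Glivenko--Cantelli-type result must be adapted to strictly stationary ergodic sequences. The cleanest route is to fix an arbitrary $\eta > 0$, use continuity of $f(\cdot, y)$ and compactness of $\Theta$ to build a finite $\eta$-net whose oscillation is controlled by the integrable envelope, apply the ergodic theorem pointwise at each net point and to the modulus-of-continuity bound, and then let $\eta \to 0$. Once the uniform convergence of $g_N$ is in hand, combining it with $W_N \overset{p}{\to} W$ through the decomposition above yields~\eqref{eqn:Qn_convergence} by the continuous mapping theorem and the triangle inequality.
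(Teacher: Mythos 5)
Your proof is correct, and at a high level it follows the same route as the paper: establish (uniform) convergence of $g_N$ to $g$, then pass to the quadratic forms using $W_N \overset{p}{\to} W$. The difference is one of rigor rather than strategy. The paper's own proof is only a few lines: it notes the pointwise convergence $\|g_N(\theta) - \E[f(\theta,y)]\|_2 \to 0$ for each $\theta \in \Theta$ (from ergodicity), writes out $\sup_{\theta}|Q_N(\theta)-Q(\theta)|$ explicitly, and then simply asserts that the terms on the right-hand side are ``bounded and arbitrarily small,'' leaving the passage from pointwise to uniform convergence implicit. You instead identify this as the crux and supply the standard uniform law of large numbers argument for stationary ergodic sequences --- finite $\eta$-net from compactness of $\Theta$ and continuity of $f(\cdot,y)$, oscillation controlled by the integrable envelope $\E[\sup_{\theta}\|f(\theta,y)\|]<\infty$, pointwise ergodic theorem at the net points and for the modulus of continuity --- together with an explicit decomposition $g_N^T(W_N-W)g_N + [g_N^T W g_N - g^T W g]$ and a stochastic bound on $\sup_\theta\|g_N(\theta)\|$ from the same envelope. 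Your observation that Assumptions~\ref{As-2-2} and~\ref{As-2-1} play no real role in this particular lemma (they only matter for extracting the sequence $\{a_N\}$ later) is also accurate. In short, your argument buys a complete proof of the uniformity that the paper takes for granted, at the cost of some length; nothing in your outline would fail.
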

\begin{proof}
	Since $g(\cdot) = \E\left[f(\cdot, y)\right]$, we have
	\[\| g_N(\theta) - \E\left[f(\theta, y)\right] \|_2 \to {0}, \quad  \theta \in \Theta.\]
	By definition,
{		\begin{multline} \label{Ap-2-1}
		\sup_{\theta \in \Theta} |Q_N(\theta) - Q(\theta)| = \\
		\sup_{\theta \in \Theta} | g_N^T W_N g_N - \E\left[f(\theta, y)\right]^T W \E\left[f(\theta, y)\right] |.
		\end{multline} 
	Since the terms {of} the right hand side of~\eqref{Ap-2-1} are bounded and arbitrary small as $N \to \infty$, we obtain~\eqref{eqn:Qn_convergence}, as required. }
\end{proof}

\begin{proof} [Proof of consistency]
	Let $\delta > 0$ be a small enough constant such that Assumption~\ref{As-Ap1-7*} holds. {By definition}, $N_\delta(\theta_0)$ is in the {$\delta$-neighborhood of~$\theta_0$ and {its} orbit. T}hen, it is an open set as an union of open sets under the standard metric in~$\mathbb{R}^p$. Therefore, {its complementary set} $[N_\delta(\theta_0)]^C$ is a closed set in $\Theta$. Since $[N_\delta(\theta_0)]^C$ is a closed set in a compact set $\Theta \subset \mathbb{R}^p$, by Assumption \ref{As-Ap1-2*}, $[N_\delta(\theta_0)]^C$ {is also compact}. According to Assumption \ref{As-Ap1-3*}, the  function $g$ is continuous, {and} therefore $Q(\theta)$ is continuous as well. {Using} the extreme value theorem, $Q(\theta)$ has a minimum over a compact space $[N_\delta(\theta_0)]^C$.
	
	Let us define $\varepsilon := \inf_{\theta \notin N_\delta(\theta_0)} Q(\theta)$, the infimum of the asymptotic objective function outside the neighborhood.
	First, we prove that $\varepsilon > 0$.
	By Assumption~\ref{As-Ap1-5*}, $W$ is a constant PD matrix. Then, $Q(\theta) = 0  \iff g(\theta) =0$. However, by Assumption~\ref{As-2-2}, $g(\theta) =0$ only for a~$\theta$ in the orbit of~$\theta_0$, hence $\theta \in N_\delta(\theta_0)$. Therefore, $\varepsilon >0$.

	By Lemma~\ref{Lemma-Ap2-1},
	$\sup_{\theta \in \Theta}|Q_N(\theta) - Q(\theta)| \overset{p}{\to} 0$, so we can choose a large enough~$N$ such that: $ |Q_N(\theta) - Q(\theta)| \leq \frac{\varepsilon}{3}$. In addition, by definition $Q_N(\hat{\theta}_N) \leq Q_N(\theta_0)$, since  $\hat{\theta}_N$ is the minimum of $Q_N$. Then, for large enough $N$ we get: 
	\begin{equation*}
	Q(\hat{\theta}_N) \leq Q_N(\hat{\theta}_N) + \frac{\varepsilon}{3} \leq
	Q_N(\theta_0) + \frac{\varepsilon}{3} \leq Q(\theta_0) + \frac{2 \varepsilon}{3} < \varepsilon.
	\end{equation*}
	In other words, $\hat{\theta}_N \in N_\delta(\theta_0)$. Since $\delta$ was chosen arbitrarily, we conclude that there exists a sequence $\{a_N \}_{N=1}^\infty \subset \group$ such that $a_N \circ \hat{\theta}_N \overset{p}{\to} \theta_0$.
\end{proof}

\subsection{Proof of Corollary~\ref{Thm-2-4}}\label{Ap-3}
The following proofs extend the proofs of~\cite{Hall2005} to uniqueness of an orbit {solution}. It is done by replacing $\hat{\theta}_N$ by $a_N \circ \hat{\theta}_N$.

\begin{proof} [Proof of asymptotic normality of the parameters estimator]
	By Assumption~\ref{As-2-1}, we have $G_N(a_N \circ \hat{\theta}_N) = G_N(\hat{\theta}_N)$. From Theorem~\ref{Thm-2-1} and the mean value theorem, there exists $a_N \in \group$ such that:
	\begin{equation} \label{Eq-Ap2-1}
		\begin{split}
			g_N(a_N \circ \hat{\theta}_N) {}& =  g_N(\theta_0) \\
		& + G_N(a_N \circ \hat{\theta}_N^\lambda) (a_N \circ \hat{\theta}_N - \theta_0).
		\end{split}	
	\end{equation}
	Here, $\hat{\theta}_N^\lambda$ exists according to the mean value theorem. Note that the i-th row of $G_N(a_N \circ \hat{\theta}_N^\lambda)$ is the corresponding row to $G_N({\beta}^{(i)})$, where ${\beta}^{(i)} = \lambda_{N,i} \theta_0^{(i)} + (1 - \lambda_{N,i})[a_N \circ \hat{\theta}_N]^{(i)}$. Multiplying~\eqref{Eq-Ap2-1} by $G_N(\hat{\theta}_N)^T  W_N = G_N(a_N \circ \hat{\theta}_N)^T  W_N$ yields
	\begin{multline*}
	G_N(\hat{\theta}_N)^T  W_N g_N(\hat{\theta}_N) = 
	G_N(a_N \circ \hat{\theta}_N)^T W_N  g_N(\theta_0)  \\ 
	+ G_N(a_N \circ \hat{\theta}_N)^T W_N  G_N(a_N \circ \hat{\theta}_N^\lambda) (a_N \circ \hat{\theta}_N - \theta_0).
	\end{multline*}
	By definition, $\hat{\theta}_N$ is the minimum of~$Q_N(\cdot)$, and therefore $G_N(\hat{\theta}_N)^T W_N g_N(\hat{\theta}_N) = 0$. Thus,
	\begin{equation}\label{Eq-Ap3-3}
	\begin{split}
		{}&\sqrt{N} (a_N \circ \hat{\theta}_N - \theta_0) = 
		-[G_N(a_N \circ \hat{\theta}_N)^T W_N \\
		&G_N(a_N \circ \hat{\theta}_N^\lambda))]^{-1} G_N(a_N \circ \hat{\theta}_N)^T   W_N  \sqrt{N} g_N(\theta_0).
	\end{split}
	\end{equation}
	From the consistency of the estimator, we obtain:
	\begin{equation*}
	\begin{split}
	 {}&G_N(a_N \circ \hat{\theta}_N) \overset{p}{\to} G_0, \\
	 &G_N(a_N \circ \hat{\theta}_N^\lambda) \overset{p}{\to} G_0.
	\end{split}
	\end{equation*}
	Next, we denote
	\begin{equation*}
	\begin{split}
		M_N :=[{}&G_N(a_N \circ \hat{\theta}_N)^T W_N 
	G_N(a_N \circ \hat{\theta}_N^\lambda))]^{-1} \\
	\times & G_N(a_N \circ \hat{\theta}_N)^T   W_N  \sqrt{N},
	\end{split}
	\end{equation*}
	hence
	\begin{equation*}
		\lim_{N \to \infty} M_N=[G_0^T W  G_0]^{-1} G_0^T  W.
	\end{equation*} 
	The right hand side of~\eqref{Eq-Ap3-3} can be expressed as $-M_N g_N(\theta_0)$. Finally, from Slutsky’s theorem and {the} central limit theorem, we derive the desired results.
\end{proof}

\begin{proof} [Proof of optimal choice of the weighting matrix]
	Let $\hat{\theta}_N(W)$ be the GMM estimator with the weighting matrix~$W$. We denote by~$V(W)$  the variance of the limiting distribution of $\sqrt{N}[a_N \circ \hat{\theta}_N(W) - \theta_0]$. We prove that $V(S^{-1})$ is the minimum asymptotic variance, which is equivalent to prove that $V(W) - V(S^{-1})$ is a PSD matrix, for any PD matrix~$W$.
	
	According to Theorem~\ref{Thm-2-1}, we have
	\begin{equation*}
		\begin{split}
		\sqrt{N}[a_N \circ \hat{\theta}_N(W) - \theta_0] {}&\to V(W), \\
		\sqrt{N}[b_N \circ \hat{\theta}_N(S^{-1}) - \theta_0] &\to V(S^{-1}),
		\end{split}
	\end{equation*} 
	when $N\to \infty$. Here, $\{a_N\}_{N=1}^\infty$ and $\{b_N\}_{N=1}^\infty$ are sequences in~$\group$ such that $a_N \circ \hat{\theta}_N(W) \overset{p}{\to} \theta_0$ and $b_N \circ \hat{\theta}_N(S^{-1}) \overset{p}{\to} \theta_0$.  We start by relating $\sqrt{N}[a_N \circ \hat{\theta}_N(W) - \theta_0]$ and $\sqrt{N}[b_N \circ \hat{\theta}_N(S^{-1}) - \theta_0]$ by
	\begin{equation} \label{Eq-Ap5-1}
	\begin{split}
	\sqrt{N}[a_N \circ \hat{\theta}_N(W) - \theta_0] = \sqrt{N}[b_N \circ \hat{\theta}_N(S^{-1}) - \theta_0 &{}] \\
	 + \sqrt{N}[a_N \circ \hat{\theta}_N(W)- b_N \circ \hat{\theta}_N(S^{-1})&].
	\end{split}
	\end{equation}
	From the proof of Corollary~\ref{Thm-2-4}-A, 
	\begin{equation*}
	\sqrt{N}[a_N \circ \hat{\theta}_N(W) - \theta_0] = - \sqrt{N} M(W)  g_N(\theta_0) + o_p(1),
	\end{equation*}
	where $M(W) = [G_0^T W G_0]^{-1} G_0^T W$ and~$o_p(1)$ represents a sequence of vectors $\{a_N\}_{N=1}^\infty \subset\R^{p\times 1}$ such that for each entry $\lim_{N\to \infty} (a_N)_{i} =0$. Therefore,
	\begin{equation*}
	\begin{split}
	\sqrt{N}[a_N \circ \hat{\theta}_N(W)- b_N \circ \hat{\theta}_N(S^{-1})] =  \\ -\sqrt{N}[M(W)-M(S^{-1})]
	g_N(\theta_0) + o_p(1).
	\end{split}
	\end{equation*}
	Now, we apply the asymptotic covariance of~\eqref{Eq-Ap5-1} and get:
	\begin{equation*}
	V(W) - V(S^{-1}) = V_1 + C + C^T,
	\end{equation*}
	where $V_1 = \lim_{N \to \infty} \operatorname{Var}\left[[M(W)-M(S^{-1})]\sqrt{N}g_N(\theta_0)\right],$ and
	\begin{equation*}
	\begin{split}
		C =	\lim_{N \to \infty} \Cov [
		{}&\sqrt{N}[M(W)-M(S^{-1})] 
		g_N(\theta_0),\\ &\sqrt{N}M(S^{-1})g_N(\theta_0)].
	\end{split}
	\end{equation*}
	By construction, $V_1$ is a PSD matrix. All that left is to prove that~$C$ is a PSD matrix and is not a PD matrix. By definition of the covariance,
	\begin{equation*}
	\begin{split}
	{}&C   = \\ 
	&\lim_{N \to \infty} \E\left[[M(W)-M(S^{-1})]  N  g_N(\theta_0) g_N(\theta_0)^T M(S^{-1})^T\right] \\
	&=\left[M(W) -M(S^{-1})\right] \lim_{N \to \infty} \E \left[N g_N(\theta_0) g_N(\theta_0)^T   \right] \\
	& M(S^{-1})^T = M(W) S M(S^{-1})^T - M(S^{-1}) S M(S^{-1})^T.
	\end{split}
	\end{equation*}
	{Hence, for the choice of $W = S^{-1}$,
	\begin{equation*}
		\begin{split}
			C   = M(W) S M(S^{-1})^T - M(S^{-1}) S M(S^{-1})^T = 0,
		\end{split}
	\end{equation*}
	 and we get the desired result.}
\end{proof}

\subsection{Computational framework for cryo-EM} \label{Ap-6}
We follow the framework developed in~\cite{Sharon2020}. This section {provides} the {complementary} details which {were} omitted from Section~\ref{Sc-4}.

\subsubsection{Basis for the volume $\widehat{\phi}$}
We represent the Fourier transform of $\phi$ by 
\begin{equation} \label{Eq-4-5}
\widehat{\phi}(\kappa, \theta, \varphi) = \sum_{\ell=0}^{L} \sum_{m=-\ell}^{\ell}\sum_{s=1}^{S(\ell)} A_{\ell,m,s} F_{\ell,s}(\kappa) Y^m_\ell(\theta,\varphi),
\end{equation}
where $Y^m_\ell(\theta,\varphi)$ are the complex spherical harmonics, $L$ is the volume's bandlimit, and $A_{\ell,m,s}$ are the expansion coefficients. The radial frequency functions $F_{\ell,s}$ are orthogonal for each fixed $\ell$, where $s = 1,\ldots, S(\ell)$ is referred to as the radial index. Popular choices of radial functions include the spherical Bessel functions~\cite{Andrews1997}, which are eigenfunctions of the Laplacian on a closed ball with Dirichlet boundary conditions, as well as the radial components of 3-D prolate spheroidal wave functions~\cite{slepian1964prolate}. 

We assume that the volume is band-limited with Fourier coefficients supported within the radius of size~$\frac{\pi n}{2}$, where~$n$ is the size of the Cartesian grid of the projection images~$I_j$, $n\times n$.  Note that since~$\phi$ is real valued, its Fourier transform is conjugate-symmetric, which imposes restrictions on the coefficients~$A_{\ell,m,s}$.

The advantage of expanding $\widehat{\phi}$ in terms of spherical harmonics is that the space of degree $\ell$ spherical harmonics is closed under rotation. {In particular,} rotating a spherical harmonic by $R\in SO(3)$ can be expressed as 
\begin{equation}
\begin{split}
R^T \cdot Y_\ell^m(x) {}&= Y_\ell^m(Rx) \\
&=\sum_{m'=-\ell}^{\ell}U^{\ell}_{m,m'}(R)Y_\ell^{m'}(x), \ x\in S^2,
\end{split}
\end{equation}
where $U^\ell (R) \in \mathbb{C}^{(2\ell+1)\times (2\ell+1)}$ are the Wigner matrices (see~\cite[p.~343]{chirikjian2016harmonic}).

\subsubsection{Basis for the probability distribution of  rotations}
We assume the probability density $\rho$ over  $SO(3)$ is a smooth, band-limited function, which can be expressed as
\begin{equation} \label{Eq-4-4}
\rho(R) = \sum_{p=0}^{P}\sum_{u,v=-p}^{p} B_{p,u,v} U^p_{u,v}(R), \ R\in SO(3).
\end{equation}
By the Peter-Weyl theorem, $\{U^\ell(R)\}_{\ell=0}^L$ form an orthonormal basis of $L^2(SO(3))$.  The cutoff $P$ is the band limit of $\rho$. Following the arguments from~\cite{Sharon2020}, we assume that $P \le 2L$.

\subsubsection{Basis for the 2-D images $\widehat{I}_j$}
Next, we represent $\widehat{I}_j$ using a function space which is closed {under} in-plane rotations, represented as elements of  $SO(2)$. By the Peter-Weyl theorem, we can can expand a band-limited image $\widehat{I}_j$:
\begin{equation}
\widehat{I}_j(\kappa,\varphi) = \sum_{q=-Q}^{Q}\sum_{t=1}^{T(q)}a^j_{q,t} f_{q,t}(\kappa)e^{iq\varphi}.
\end{equation}
Here, the radial frequency functions $f_{q,t}$, for fixed $q$, are taken to be an orthonormal basis. Specifically, we choose  $f_{q,t}$ to be the radial components of the 2-D prolate spheroidal wave functions~\cite{slepian1964prolate}. Following~\cite{Sharon2020}, we take $Q = L$.

\subsubsection{Representation of the moments}
{We present  the connection} between the first two moments of the observed images and the coefficients $\{A_{\ell,m,s}\}_{\ell,m,s}$ and $\{B_p,u,v\}_{p,u,v}$ of the volume and the distribution of rotations, respectively.

We index the images in terms of $R \in SO(3)$ (instead of~$j$ in~\eqref{Eq-4-6}):
\begin{equation}
\widehat{I_R}(\kappa, \varphi) = \sum_{q=-Q}^Q\sum_{t=1}^{T(q)} a^R_{q,t} f_{q,t} e^{iq\varphi}.
\end{equation}
Using the Fourier slice theorem and a few algebraic steps, it can be shown that the 2-D and the 3-D coefficients are related via
\begin{equation}\label{Eq-4-7}
a^R_{q,t}  = \sum_{\ell=|q|}^{L} \sum_{s=1}^{S(\ell)}\sum_{m=-\ell}^{\ell} A_{\ell,m,s} U^{\ell}_{m,q}(R) \gamma^{q,t}_{\ell,s},
\end{equation}
where $\gamma^{q,t}_{\ell,s}$ are constants depending on the radial functions:
\begin{equation}
\gamma^{q,t}_{\ell,s} = \frac{1}{2\pi} \int_{0}^{\infty}\int_{0}^{2\pi}Y^q_{\ell}(\frac{\pi}{2}, \varphi)e^{-iq\varphi}F_{\ell,s}(\kappa)f_{q,t}(\kappa)\kappa d\kappa d\varphi.
\end{equation}
In practice, the coefficients $\gamma^{q,t}_{\ell,s}$ are calculated via numerical integration over a closed segments~\cite{Sharon2020, Lederman2017}.

\subsubsection{First Moment}
By taking the expectation over $R \sim \rho$ and applying the distribution's expansion~\eqref{Eq-4-4}, {we get}
\begin{equation} \label{Eq-4-8}
\E\left[a^R_{q,t}\right] = \sum_{\ell=|q|}^{\min(L,P)} \sum_{s=1}^{S(\ell)}\sum_{m=-\ell}^{\ell}A_{\ell,m,s}B_{\ell,-m,-q}\gamma^{q,t}_{\ell,s}\frac{(-1)^{m+q}}{2\ell+1}.
\end{equation}
{We note that} the first moment is linear in both the {volume's coefficients $\{A_{\ell,m.s}\}_{\ell,m.s}$ and the distribution $\{B_{p,u,v}\}_{p,u,v}$, as in the MRA model.}

\subsubsection{Second Moment}
We have:
{
\begin{equation}\label{Eq-4-10}
\begin{split}
\E\left[a^R_{q_1,t_1} a^R_{q_2,t_2}\right] =
 \sum_{\begin{gathered}
	\scriptscriptstyle{\ell}_1,m_1,s_1,\\ \scriptscriptstyle{\ell}_2,m_2,s_2
	\end{gathered}}
	A_{\ell_1,m_1,s_1} A_{\ell_2,m_2,s_2} \gamma^{q_1,t_1}_{\ell_1,s_1} \gamma^{q_2,t_2}_{\ell_2,s_2} \times\\
 \sum_p B_{p,-m_1-m_2,-q_1-q_2} C^{q_1,q_2}_p(\ell_1,\ell_2,m_1,m_2) \frac{(-1)^{m_1 + m_2}}{2p+1}. 
\end{split} 
\end{equation} }
The first summation have the same range as in~\eqref{Eq-4-8} for each set $(\ell_i,s_i,m_i)$. The second summation's range is \[\max(|\ell_1-\ell_2|,|m_1+m+2|, |q_1+q+2|) \le p \le \min(\ell_1+\ell_2, P).\] In addition:
\begin{equation} \label{Eq-4-9}
\begin{gathered}
{C^{q_1,q_2}_p(\ell_1,\ell_2,m_1,m_2) }=  \\
 C(\ell_1, m_1; \ell_2,m_2| p, m_1+m_2) C(\ell_1, q_1; \ell_2,q_2| p, q_1+q_2),
\end{gathered}
\end{equation}
is the product of two Clebsch-Gordon coefficients. The second moment is a quadratic function of the volume's coefficients $\{A_{\ell,m.s}\}_{\ell,m.s}$ and is linear in the distribution's coefficients $\{B_{p,u,v}\}_{p,u,v}$, similarly to the MRA model.

\end{document}